\documentclass[12pt]{article}

\usepackage[utf8]{inputenc}
\usepackage[T1]{fontenc}
\usepackage[margin=1in]{geometry}
\usepackage{setspace}
\usepackage{fancyhdr}

\usepackage{amsmath,amsfonts,amssymb,amsthm,latexsym}
\usepackage{graphicx}
\usepackage{bm}
\usepackage{subcaption}
\usepackage{multirow}
\usepackage{xcolor}
\usepackage{enumitem}
\usepackage{tikz,pgfplots}

\usepackage[round]{natbib}
\usepackage{authblk}
\usepackage[colorlinks=true,citecolor=blue,linkcolor=blue,urlcolor=blue]{hyperref}

\usepackage{authblk}

\pgfplotsset{compat=1.18}
\usetikzlibrary{arrows.meta,calc}

\newtheorem{theorem}{Theorem}
\newtheorem{lemma}{Lemma}

\theoremstyle{definition}

\newcommand{\Var}{\mathrm{Var}}

\title{Single-Observation Uniformity Testing under Increasing Precision via Lacunary Harmonics}

\author{Davide Ferrari}

\affil{Faculty of Economics and Management\\
Free University of Bozen-Bolzano\\
\texttt{davferrari@unibz.it}}

\date{}

\begin{document}

\maketitle

\begin{abstract}
A test of uniformity on $[0,1]$ is developed for the setting of a single observation recorded with sufficient precision. Although consistency against general alternatives is not attainable with only one draw in the classical large-sample sense, a multiscale harmonic digit expansion provides a framework for structured inference. By aggregating trigonometric components across digit scales at
Hadamard-gap frequencies, a quadratic test statistic is constructed whose null distribution converges to a chi-square law via a lacunary central limit theorem. Under departures from uniformity, the statistic is driven by Fourier components induced by digit-scale transformations of the observation, with detectability depending on their coherent accumulation as precision increases. The resulting procedure detects multiscale harmonic structure that remains invisible to classical digit-frequency methods.
\end{abstract}

\noindent\textbf{Keywords:}
Uniformity testing; single-observation inference; lacunary trigonometric series; central limit theorem; multiscale harmonic analysis.

\section{Introduction}

Testing the null hypothesis of uniformity on $[0,1]$, namely $H_0: U \sim \mathrm{Unif}(0,1)$, is a classical problem in probability and statistics, with applications ranging from goodness-of-fit procedures based on the probability integral transform to the validation of pseudorandom number generators, cryptography, digit-based anomaly detection, and the study of scale-invariant phenomena. Standard uniformity tests typically rely on a large collection of observed values. Nevertheless, situations  arise in which
one wishes to assess the compatibility of a few high-precision realized
values with a specified random uniform benchmark. The present work considers the most extreme case, in which only a single realized value is available for assessment.

While classical consistency against general alternatives is impossible with only a single observation, a sufficiently precise recording of that observation reveals multiscale structure in its digit expansion that can be exploited for statistical inference. Let $U \sim \mathrm{Unif}(0,1)$ and, for an integer $b\ge 2$, write its base-$b$ expansion
\[
U
=
0.D_1D_2D_3\cdots \;\; \text{(base $b$)}
=
\sum_{t=1}^{\infty}\frac{D_t}{b^t},
\qquad D_t \in \{0,1,\ldots,b-1\}.
\]
The first digit $D_1$ determines in which of the $b$ intervals
$[ {j}/{b}, (j+1)/b)$, $j=0,\ldots,b-1$, of length $b^{-1}$
the observation falls; the second digit refines this location within subintervals of length $b^{-2}$, and so on. Thus, the sequence $(D_t)_{t \ge 1}$ yields a multiscale decomposition of $U$, where the digit in position $t$ captures information at resolution level $b^{-t}$. To formalize inference at finite precision, fix $m \ge 1$ and define the truncated representation
\begin{equation}\label{eq:finite_prec}
U_m
:=
\sum_{t=1}^{m}\frac{D_t}{b^t}
=
\frac{1}{b^m}
\sum_{t=1}^{m} D_t\, b^{m-t},
\end{equation}
which takes values on the discrete grid $
\mathcal G_m
=
\{ k b^{-m} : k = 0,1,\ldots,b^m-1 \}$.
Under $H_0$, $U_m$ is uniformly distributed on $\mathcal G_m$; equivalently, the digits $D_1,\ldots,D_m$ are independent and each uniformly distributed on $\{0,\ldots,b-1\}$.

The core of the proposed approach is a harmonic representation
aligned with the digit scales of $U_m$.
For $1 \le j \le b-1$  and $1 \le t \le m$, define the integer lacunary
frequencies $
k_{j,t} := j b^{t-1}$,
and the associated complex harmonic components
\[
Z_{j,t} := e^{2\pi i k_{j,t} U_m} \in \mathbb S^1,
\qquad
\mathbb S^1 := \{z \in \mathbb C : |z| = 1\}.
\]
By Euler’s formula, $
Z_{j,t}
=
\cos(2\pi k_{j,t} U_m)
+
i \sin(2\pi k_{j,t} U_m)$,
with real and imaginary parts
\[
C_{j,t} := \Re(Z_{j,t}) = \cos(2\pi k_{j,t} U_m),
\qquad
S_{j,t} := \Im(Z_{j,t}) = \sin(2\pi k_{j,t} U_m).
\]
Multiplying $U_m$ by $b^{t-1}$ shifts the radix point in the base-$b$
expansion of $U_m$, so that $Z_{j,t}$ depends only on the tail digits
$D_t,\ldots,D_m$ and therefore isolates information at resolution $b^{-t}$.
The index $t$ encodes the digit scale, while $j$ represents the Fourier mode within that scale. This moment condition is central to the construction: under $H_0$,
$\mathbb{E}[Z_{j,t}]=0$ for all $j,t$, whereas any departure from
uniformity that produces nonzero multiscale harmonic moments creates
systematic imbalance across digit scales and can therefore be detected
through their accumulation.

A key contribution of this paper is to show that a single
observation recorded with relatively high precision \(m\) admits a
nondegenerate asymptotic test when stochastic variation across digit
scales is properly exploited. This is achieved through the deliberate
choice of the frequencies \(k_{j,t}=j b^{t-1}\), which are canonically
aligned with the base-\(b\) expansion while also satisfying the Hadamard
gap condition \({k_{j,t+1}}/{k_{j,t}} = b > 1\). The resulting harmonic
system therefore falls in the so-called lacunary regime, and classical
limit theorems imply that sums over these exponentially separated
frequencies satisfy a central limit theorem.
\citep{kac1946distribution, salem1947lacunary, kahane1985some,
zygmund2002trigonometric}. Because the frequencies oscillate at
distinct digit scales, their phases depend on asymptotically disjoint
parts of the expansion, so the components \(\{Z_{j,t}\}_{t=1}^m\)
behave asymptotically like independent random variables despite being
derived from a single observation. This mechanism yields a test
statistic with an explicit null limit that quantifies aggregate spectral
imbalance across scales.

The use of trigonometric moment conditions for testing uniformity
dates back to \citet{rayleigh1919xxxi} and was formalized in
Neyman’s smooth test framework \citep{neyman1937smooth},
which embeds the null hypothesis in a finite-dimensional exponential family
spanned by orthonormal polynomials. Modern treatments and extensions are presented, e.g., in
\cite{rayner2009smooth}. Another class of procedures is based on empirical-process
functionals, with prominent examples including the Kolmogorov–Smirnov and
Cramér-von Mises statistics as well as their circular analogues due to
Watson and Kuiper \citep{watson1961goodness, kuiper1960tests},
which can be expressed as quadratic or supremum functionals of the
empirical distribution function
\citep{durbin1972components1, eubank1992asymptotic}. Extensions to spheres and higher-dimensional settings include invariant and projection-based methods, as well as Sobolev tests constructed from spherical harmonics
\citep{gine1975invariant, beran1979exponential, prentice1978invariant, jiang2025asymptotic, mardia2009directional, ley2017modern}.

A distinct line of work in time-series analysis considers spectral or Fourier-based tests to detect departures from white noise or other null models through the behaviour of Fourier coefficients or periodogram ordinates across frequencies. Classical examples include the harmonic-analysis tests of \citet{fisher1929tests} and the periodogram-based procedures studied by \citet{bartlett1950periodogram}; see also \citet{GrenanderRosenblatt2008} for a comprehensive treatment of such spectral methods. Modern developments aggregate information across frequencies using quadratic or kernel-weighted statistics derived from spectral density estimates; see, for example,
\citet{Hong1996}, \citet{Paparoditis2000}, \citet{Bagchi2018},
\citet{CharaciejusRice2020}, and \citet{KimKokoszkaRice2023}. The proposed approach is analogous in spirit, in that it also aggregates
harmonic information across frequencies. Its source of asymptotic
variation is different, however. In spectral time-series methods,
nondegenerate limiting distributions arise from increasing the length of
the observed series. Here the harmonic components are generated from a
single high-precision observation through lacunary digit-scale
transformations, and the limiting behaviour is driven by increasing
precision rather than by increasing time-series length.

More generally, in the approaches discussed above, nondegenerate
limiting distributions are obtained by aggregating randomness across
many observations, either independent replicates or observations indexed
over a long time series. With a single observation, no such
replication-based averaging is available, and ordinary empirical moments
reduce to fixed evaluations of basis functions at the observed point.
The approach developed here instead constructs an averaging mechanism
across digit scales of one high-precision realization. It evaluates
harmonic components at exponentially separated frequencies, so that
lacunary spacing yields asymptotically weak dependence among the
resulting scale-indexed terms. Randomness therefore enters through the
uniform random-digit model and is accumulated across precision levels,
rather than across repeated observations. Accordingly, the proposed
procedure should not be interpreted as testing whether a deterministic
number is intrinsically random; it tests whether a high-precision
realization is atypical relative to a uniform random-digit benchmark
under an increasing-precision calibration.

\section{Lacunary Uniformity Test}

\subsection{Multiscale Harmonic Representation for Uniform Digits}\label{sec:multiscale}

The harmonic expansion induced by the lacunary frequencies
$k_{j,t}=j b^{t-1}$ introduced above is formalized in this section. Using the base-$b$ representation \eqref{eq:finite_prec}, we obtain the decomposition
\begin{equation}\label{eq:expansion}
b^{t-1} U_m
=
\ell  + \frac{D_t}{b} + R_t, \qquad R_t := \sum_{r=t+1}^{m}\frac{D_r}{b^{\,r-t+1}}.
\end{equation}
for some integer $\ell$ depending on $D_1,\dots,D_{t-1}$, where $R_t \in [0,1/b)$ is a bounded remainder term. Multiplication by $b^{t-1}$ in \eqref{eq:expansion} shifts the radix point $t-1$ places to the right, so
the digits $D_1,\dots,D_{t-1}$ contribute only to the integer part. By $2\pi$-periodicity of sine and cosine,
\begin{align}
C_{j,t}
&= \cos(2\pi j b^{t-1}U_m)
= \cos\left(2\pi j \dfrac{D_t}{b}  + 2\pi j R_t \right),\\
S_{j,t}
&= \sin(2\pi j b^{t-1}U_m)
= \sin\left(2\pi j \dfrac{D_t}{b}  + 2\pi j R_t \right).
\end{align}
Hence, $Z_{j,t}=C_{j,t}+iS_{j,t}$ depends only on the tail digits
$D_t,\dots,D_m$ and is insensitive to $D_1,\dots,D_{t-1}$, showing that frequency $k_{j,t}=j b^{t-1}$ isolates scale $t$ up to the bounded tail remainder. Next, some useful moment and orthogonality properties of the
harmonic components under the null are recorded.

\begin{lemma}[Harmonic moments under $H_0$]\label{lem:grid_moments}
Under $H_0$, for all $j=1,\dots,b-1$ and $t=1,\dots,m$, $
\mathbb E[Z_{j,t}]=0$.
Moreover, $
\mathbb E\!\left[ Z_{j,t}\,\overline{Z_{j',t'}} \right] = 0$
whenever \(k_{j,t}\neq k_{j',t'}\). Consequently,
$\mathbb E[C_{j,t}]=\mathbb E[S_{j,t}]=0$,
$\mathbb E[C_{j,t}^2+S_{j,t}^2]=1$.
\end{lemma}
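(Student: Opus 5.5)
Under $H_0$, $U_m$ is uniform on the grid $\mathcal G_m=\{kb^{-m}:0\le k<b^m\}$, so every moment in the statement reduces to a finite exponential sum. For any integer $n$,
\[
\mathbb E\big[e^{2\pi i n U_m}\big]=\frac{1}{b^m}\sum_{k=0}^{b^m-1}e^{2\pi i n k/b^m},
\]
which is a geometric sum in $\omega=e^{2\pi i n/b^m}$. It equals $1$ if $b^m\mid n$. Otherwise $\omega\neq 1$ and $\omega^{b^m}=1$, so it equals $(\omega^{b^m}-1)/(b^m(\omega-1))=0$. The whole lemma will follow from this orthogonality relation, applied with suitable choices of $n$.

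\emph{First claim.} Take $n=k_{j,t}=jb^{t-1}$. Since $1\le j\le b-1$ and $1\le t\le m$, we have $0<n\le (b-1)b^{m-1}<b^m$, so $b^m\nmid n$ and $\mathbb E[Z_{j,t}]=0$.

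\emph{Orthogonality claim.} Since $|Z|=1$, we have $Z_{j,t}\overline{Z_{j',t'}}=e^{2\pi i (k_{j,t}-k_{j',t'})U_m}$. Set $n=k_{j,t}-k_{j',t'}$. This is a nonzero integer by hypothesis, and $|n|<b^m$ because both frequencies lie in $(0,b^m)$. Hence $b^m\nmid n$, and the expectation vanishes by the same relation.

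\emph{Consequences.} $C_{j,t}$ and $S_{j,t}$ are the real and imaginary parts of $Z_{j,t}$, so they have mean zero by linearity. Also $C_{j,t}^2+S_{j,t}^2=|Z_{j,t}|^2=1$ holds deterministically, so its expectation is $1$.

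An alternative route goes through the digit decomposition \eqref{eq:expansion}. By independence, conditioning on the tail digits $D_{t+1},\dots,D_m$, the factor $e^{2\pi i jD_t/b}$ has mean zero because $D_t$ is uniform on $\{0,\dots,b-1\}$ and $b\nmid j$. The grid computation is cleaner, though, and also handles the cross terms directly. The only step needing care is the range check $0<|n|<b^m$, which guarantees the frequency differences are never multiples of $b^m$. This is exactly where the bounds $j\le b-1$ and $t\le m$ enter; without them, aliasing on the finite grid would break the orthogonality. I expect no further obstacle.
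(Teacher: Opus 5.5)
Your proof is correct and follows the paper's argument essentially step for step. Every moment is reduced to the exponential sum $b^{-m}\sum_{k=0}^{b^m-1}e^{2\pi i nk/b^m}$; the range check $0<|n|<b^m$ excludes aliasing for both single frequencies and frequency differences; and the consequences follow from linearity and $|Z_{j,t}|=1$. The only addition is that you verify the discrete orthogonality explicitly via the geometric series, which the paper simply invokes.
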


For non-boundary frequencies satisfying $
2k_{j,t}\not\equiv 0 \pmod {b^m}$,
the unit second moment is split equally between the real and imaginary
parts: $
\mathbb E[C_{j,t}^2]=\mathbb E[S_{j,t}^2]=\frac12$.
At exceptional boundary frequencies for which
\(2k_{j,t}\equiv0\pmod {b^m}\), this equal split may fail. These
finite-grid boundary cases do not affect the asymptotic lacunary limit
used below.

This lemma highlights that, under $H_0$, the lacunary harmonics constitute an orthogonal
system under the uniform distribution on $\mathcal G_m$, so all nonzero frequencies have
vanishing expectation. Deviations from uniformity are therefore
encoded  in their harmonic projections. To formalize this connection, note that a distribution on
$\{0,\dots,b-1\}$ is uniform if and only if all its nontrivial discrete
Fourier coefficients vanish. It is therefore natural to introduce the
Fourier coefficients of the general  digit distribution $d \mapsto \mathbb{P}(D_t=d)$:
\begin{equation}\label{eq:fourier_coefs}
\phi_j(t)
:=
\sum_{d=0}^{b-1}
e^{2\pi i j d/b}\,\mathbb P(D_t=d),
\qquad j=0,\dots,b-1,
\end{equation}
with $\phi_0(t)=1$. By Fourier inversion on $\{0,\dots,b-1\}$, the digit probabilities admit
the representation
\begin{equation}\label{eq:inv_dft}
\mathbb P(D_t=d)
=
\frac{1}{b}
+
\frac{1}{b}
\sum_{j=1}^{b-1}
\phi_j(t)\, e^{-2\pi i j d/b},
\qquad d=0,\dots,b-1.
\end{equation}
Thus the nontrivial coefficients $\{\phi_j(t)\}_{j=1}^{b-1}$ encode
precisely the deviation of the digit law from uniformity at scale $t$.
In particular, under $H_0$ the digits are uniform, so $\phi_j(t)=0$ for
all $j\ge1$.

The following lemma links the harmonic moments to these Fourier coefficients.

\begin{lemma}[Harmonic moments and digit law]
\label{lem:harmonic_digit_link}
Fix $t$ and $j\ge1$. Then
\begin{equation}\label{eq:harmonic_moment}
\mathbb E[Z_{j,t}]
=
\sum_{d=0}^{b-1}
e^{2\pi i j d/b}
\,\mathbb E\!\left[e^{2\pi i j R_t}\mid D_t=d\right]
\mathbb P(D_t=d).
\end{equation}
If $R_t$ is independent of $D_t$, then
$
\mathbb E[Z_{j,t}]
=
\kappa_j(t)\,\phi_j(t)$, where $
\kappa_j(t):=\mathbb E[e^{2\pi i j R_t}]$,
so the harmonic moments are proportional to the discrete Fourier coefficient
$\phi_j(t)$ of the digit distribution.
\end{lemma}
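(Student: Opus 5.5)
The plan is to start from the radix-shift decomposition \eqref{eq:expansion}, which writes $b^{t-1}U_m=\ell+D_t/b+R_t$ with $\ell$ an integer. Since $j$ is an integer, $j\ell$ is an integer, so $e^{2\pi i j\ell}=1$. Hence, pointwise on the sample space,
\[
Z_{j,t}=e^{2\pi i j b^{t-1}U_m}=e^{2\pi i j D_t/b}\,e^{2\pi i j R_t}.
\]
This identity is purely algebraic and does not use $H_0$. It therefore holds under any law for the digits, which is the setting of the lemma.

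Next I would compute $\mathbb E[Z_{j,t}]$ by conditioning on the first-scale digit $D_t$, which takes finitely many values. By the tower property, $\mathbb E[Z_{j,t}]=\sum_{d=0}^{b-1}\mathbb E[Z_{j,t}\mid D_t=d]\,\mathbb P(D_t=d)$, restricting to those $d$ with positive probability; the remaining terms vanish. On the event $\{D_t=d\}$ the factor $e^{2\pi i j D_t/b}$ equals the constant $e^{2\pi i j d/b}$ and can be pulled out of the conditional expectation. This gives \eqref{eq:harmonic_moment}. Integrability is automatic because every quantity lies on $\mathbb S^1$.

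For the second claim, assume $R_t$ is independent of $D_t$. Then $\mathbb E[e^{2\pi i jR_t}\mid D_t=d]=\mathbb E[e^{2\pi i jR_t}]=\kappa_j(t)$ for every $d$ with $\mathbb P(D_t=d)>0$. Factoring this constant out of the sum leaves $\kappa_j(t)\sum_d e^{2\pi i jd/b}\mathbb P(D_t=d)$, which is exactly $\kappa_j(t)\phi_j(t)$ by the definition \eqref{eq:fourier_coefs}. Equivalently, one can bypass conditioning and use $\mathbb E[XY]=\mathbb E[X]\,\mathbb E[Y]$ for the independent bounded variables $X=e^{2\pi i jD_t/b}$ and $Y=e^{2\pi i jR_t}$.

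There is no real obstacle here. The only care needed is at the first step: it must be justified that the integer part $\ell$ drops out for every integer $j$, not just $j=1$, and that $R_t$ is a function of $D_{t+1},\dots,D_m$ alone, so that the independence hypothesis is meaningful. As a consistency check, under $H_0$ we have $\phi_j(t)=0$, which recovers $\mathbb E[Z_{j,t}]=0$ from Lemma~\ref{lem:grid_moments}.
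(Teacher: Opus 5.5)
Your proposal is correct and follows essentially the same route as the paper: both use the radix-shift decomposition to write $Z_{j,t}=e^{2\pi i jD_t/b}e^{2\pi i jR_t}$, condition on $D_t$ via the law of total expectation, and then factor out $\kappa_j(t)$ under independence. Your explicit remark that $e^{2\pi i j\ell}=1$ for every integer $j$ simply spells out the periodicity step that the paper's proof leaves implicit when it works directly with $F_t=D_t/b+R_t$.
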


The harmonic moments at
scale $t$ are governed by the discrete Fourier structure of the digit
distribution $\mathbb P(D_t=d)$. The quantity
$\mathbb E[e^{2\pi i j R_t}\mid D_t=d]$ acts as a modulation factor in \eqref{eq:harmonic_moment},
reflecting dependence between $D_t$ and the tail digits.
In this framework, the harmonic components $Z_{j,t}$ act as observable
probes of the Fourier coefficients $\phi_j(t)$. When $R_t$ is
independent of $D_t$ (as under $H_0$), the expectation
$\mathbb E[Z_{j,t}]$ is proportional to $\phi_j(t)$, so vanishing
harmonic moments correspond to uniform digit probabilities at
scale $t$. Under general dependence, $\mathbb E[Z_{j,t}]$ becomes a
modulated version of $\phi_j(t)$, yet continues to encode the same
Fourier structure of the digit law. The statistic introduced in the next
section detects departures from uniformity by aggregating these
harmonic components across digit scales.

\subsection{The Lacunary Test Statistic and its Null Distribution}

We now convert the multiscale harmonic representation into a formal
inferential procedure. Consider averaging the harmonic contributions $
Z_{j,t} = C_{j,t} + i S_{j,t}$ across digit scales and define
\begin{equation}\label{eq:barZ}
\bar Z_j
:=
\frac{1}{m}
\sum_{t=1}^m Z_{j,t}
=
\bar C_j + i \bar S_j, \quad  j = 1,\dots, b-1,
\end{equation}
where $
\bar C_j
:= m^{-1}
\sum_{t=1}^m C_{j,t}$, and $
\bar S_j
:=
m^{-1}
\sum_{t=1}^m S_{j,t}$.
Aggregating energies $|\bar Z_j|^2
=
\bar C_j^2 + \bar S_j^2$ across orders yields the global test statistic
\begin{equation}\label{eq:Tcomplex}
T
:= 2 m \sum_{j=1}^{b-1} |\bar Z_j|^2 = 2m \sum_{j=1}^{b-1} (\bar C_j^2 + \bar S_j^2),
\end{equation}
which measures the total energy contained
in the single observation at precision level $m$. Under $H_0$, the components are centered and orthogonal across
distinct frequencies (Lemma~\ref{lem:grid_moments}), with
$\mathbb E[|Z_{j,t}|^2]=1$. Hence $
\Var(\bar Z_j)= 1/m$,
so each aggregated coefficient $\bar Z_j$ is of order $m^{-1/2}$.

Although the variables $Z_{j,t}$ are deterministic functions of the
same realization $U_m$, the considered frequencies $k_{j,t}=j b^{t-1}$
satisfy the Hadamard-gap condition
$k_{j,t+1}/k_{j,t}=b>1$, which places the system in the lacunary regime. Central limit theorems for lacunary series \citep{kac1946distribution, salem1947lacunary,kahane1985some,zygmund2002trigonometric}
imply that the scale-aggregated coefficients exhibit Gaussian
fluctuations. In particular, as $m\to\infty$,
\[
\sqrt m\,\bar Z_j
=
\frac{1}{\sqrt m}\sum_{t=1}^m Z_{j,t}
\xrightarrow{d}
\mathcal{CN}(0,1),
\qquad j=1,\dots,b-1,
\]
where $\mathcal{CN}(0,1)$ denotes a centered complex normal random
variable with independent real and imaginary parts, each
$\mathcal N(0,1/2)$. Moreover, the convergence holds jointly in
$j=1,\dots,b-1$, so that the vector
$\big(\sqrt m\,\bar Z_1,\dots,\sqrt m\,\bar Z_{b-1}\big)$
converges to a centered complex normal distribution with diagonal
covariance matrix.
The null behavior of the average harmonic contributions and that of the proposed statistic
\eqref{eq:Tcomplex} are summarized in the following theorem.

\begin{theorem}[Null limit of the lacunary test statistic]
\label{thm:null_limit}
Assume $H_0$, so that $U\sim\mathrm{Unif}(0,1)$ and
$U_m$ denotes its base-$b$ truncation, which is uniform on
$\mathcal G_m=\{r b^{-m}: r=0,\ldots,b^m-1\}$.
For fixed $b$ and increasing precision $m\to\infty$,
\[
(\sqrt m\,\bar Z_1,\dots,\sqrt m\,\bar Z_{b-1})
\xrightarrow{d}
\mathcal{CN}_{b-1}(0,I_{b-1}),
\]
where $\mathcal{CN}_{b-1}(0,I_{b-1})$ denotes the $(b-1)$-variate
centered complex normal distribution with identity covariance.
Consequently,
\[
T
=
2 m \sum_{j=1}^{b-1} |\bar Z_j|^2
\xrightarrow{d}
\chi^2_{2(b-1)}.
\]
\end{theorem}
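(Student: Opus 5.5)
The plan is to prove the joint complex CLT through the Cramér--Wold device applied to the real vector $(\sqrt m\,\bar C_j,\sqrt m\,\bar S_j)_{j=1}^{b-1}$. The lacunary dependence is replaced by a finite-range dependence coming from the digit structure. Under $H_0$ the digits $D_1,\dots,D_m$ are i.i.d.\ uniform on $\{0,\dots,b-1\}$. By \eqref{eq:expansion}, $Z_{j,t}=\exp\{2\pi i j(D_t/b+R_t)\}$ is a function of $D_t,\dots,D_m$ only.

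\textbf{Step 1 (finite-range approximation).} Fix $L=L_m:=\lceil 2\log_b m\rceil$ and set
\[
\tilde Z_{j,t}:=\exp\Bigl\{2\pi i j\sum_{r=t}^{\min(t+L,m)} D_r\, b^{-(r-t+1)}\Bigr\}.
\]
Then $|Z_{j,t}-\tilde Z_{j,t}|\le 2\pi j\, b^{-L-1}\le 2\pi\, m^{-2}$ deterministically. Hence
\[
\Bigl|m^{-1/2}\sum_{t=1}^m (Z_{j,t}-\tilde Z_{j,t})\Bigr|\le 2\pi m^{-1/2}\to0 .
\]
It therefore suffices to prove the CLT for $m^{-1/2}\sum_t \tilde Z_{j,t}$. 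These summands are bounded, and they form an $L_m$-dependent triangular array in $t$, because $\tilde Z_{j,t}$ and $\tilde Z_{j',t'}$ use disjoint digit blocks when $|t-t'|>L_m$. For an arbitrary real linear combination $X_t=\sum_j(a_j\Re\tilde Z_{j,t}+c_j\Im \tilde Z_{j,t})$, I will apply a CLT for $\ell_m$-dependent bounded triangular arrays, for instance Berk's theorem or a direct Bernstein big-block/small-block argument. Its growth condition on the dependence range is satisfied since $L_m=O(\log m)$. Stationarity holds except for the last $L_m$ indices, which contribute $O(L_m/\sqrt m)\to0$ and can be discarded.

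\textbf{Step 2 (limiting covariance).} Because the approximation error is deterministic and $o(m^{-1/2})$ after summation, the covariances of the tilde sums have the same limits as those of the exact sums. I will compute the latter with Lemma~\ref{lem:grid_moments}. First, $\mathbb E[Z_{j,t}\overline{Z_{j',t'}}]=0$ unless $jb^{t-1}=j'b^{t'-1}$. For $t=t'$ this forces $j=j'$. For $t\ne t'$, say $t'<t$, it would require $j' = j b^{t-t'}\ge b$, which is impossible. Hence $\Var(\sqrt m\,\bar Z_j)=1$ and $\mathrm{Cov}(\sqrt m\,\bar Z_j,\sqrt m\,\bar Z_{j'})=0$ for $j\neq j'$. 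Second, for the pseudo-covariance, the same computation on the grid gives $\mathbb E[Z_{j,t}Z_{j',t'}]=\mathbb E[e^{2\pi i (k_{j,t}+k_{j',t'})U_m}]$. This vanishes unless $k_{j,t}+k_{j',t'}\equiv0\pmod{b^m}$. Since $0<k_{j,t}+k_{j',t'}<2b^m$, such a congruence needs $k_{j,t}+k_{j',t'}=b^m$, which forces $t=t'=m$ or $\max(t,t')=m$. That is only $O(1)$ index pairs, contributing $O(1/m)$ after normalization. Therefore $m^{-1}\sum_{t,t'}\mathbb E[Z_{j,t}Z_{j',t'}]\to0$. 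Vanishing covariance together with vanishing pseudo-covariance gives, in the limit, real and imaginary parts that are uncorrelated with variance $1/2$ each, across all $j$. This is exactly the covariance of $\mathcal{CN}_{b-1}(0,I_{b-1})$. Cramér--Wold then yields the first claim.

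\textbf{Step 3 (the statistic).} By the continuous mapping theorem,
\[
T=\sum_{j=1}^{b-1}\bigl[(\sqrt{2m}\,\bar C_j)^2+(\sqrt{2m}\,\bar S_j)^2\bigr]
\]
converges in distribution to a sum of $2(b-1)$ squares of independent $\mathcal N(0,1)$ variables, i.e.\ to $\chi^2_{2(b-1)}$.

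The main obstacle is Step 1. I need to check that the dependent-array CLT applies with a dependence range $L_m\to\infty$. I also need to check that the variance of the tilde sums is nondegenerate, which follows from Step 2 whenever $(a,c)\neq0$. If Berk's conditions prove awkward, I would instead run the Bernstein blocking argument by hand. This uses big blocks of length $p_m=\lfloor m^{1/2}\rfloor$ separated by small blocks of length $L_m$, so that the big blocks become exactly independent and Lyapunov's condition is immediate from boundedness.
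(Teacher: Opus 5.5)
Your proof is correct in substance, but it takes a different route from the paper. The paper never works with digits. After Cram\'er--Wold, it applies the Salem--Zygmund lacunary CLT to $\sum_{t\le m} g_a(b^{t-1}U)$ with $U$ continuous uniform, reads the covariance $\tfrac12 I_{2(b-1)}$ off the normalization $A_m^2=\tfrac m2\|a\|_2^2$, and then passes to $U_m$ via the Lipschitz bound $\sum_t b^{t-1}|U-U_m|=O(1)$. You instead use the fact that the frequencies $jb^{t-1}$ are aligned with base $b$, so under $H_0$ each $Z_{j,t}$ is a function of i.i.d.\ tail digits. Truncating to windows of length $O(\log m)$ gives a bounded $L_m$-dependent array up to a deterministic $o(1)$ error. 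A CLT for growing dependence range then replaces the lacunary theorem, and you compute the covariance exactly on $\mathcal G_m$. The paper's argument is shorter and needs only Hadamard lacunarity of the merged frequency set, not the base alignment. It is also the same machinery reused in Theorem~\ref{thm:alt_regimes} and Lemma~\ref{lem:exp_normalizing_constant}. Yours is elementary and self-contained, works directly on the finite grid, and makes explicit the boundary pseudo-covariance at $t=m$, which the paper absorbs by going through continuous $U$.

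Three details need tightening.

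First, $jb^{t-1}+j'b^{t'-1}=b^m$ forces exactly $t=t'=m$ and $j+j'=b$: if $t<t'$, dividing by $b^{t-1}$ gives $j\equiv 0\pmod b$. Your weaker alternative ``$\max(t,t')=m$'' would permit $O(m)$ pairs and would not give the $O(1/m)$ bound you claim.

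Second, Berk's theorem also requires the linear variance bound $\Var\bigl(\sum_{t=s+1}^{s'}X_t\bigr)\le K(s'-s)$ uniformly in $s,s',m$, which you did not check. It holds because $\tilde Z_{j,t}$ contains the factor $e^{2\pi i jD_t/b}$, which has mean zero and is independent of every later window. Hence the summands are exactly uncorrelated across $t$. In fact the exact $Z_{j,t}$, read backwards in $t$, are martingale differences for $\sigma(D_t,\dots,D_m)$, so a martingale CLT would work even without truncation.

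Third, in the Bernstein fallback with $p_m=\lfloor m^{1/2}\rfloor$, the bound $|Y_i|\le Cp_m$ only makes the Lyapunov ratio $O(p_m/\sqrt m)=O(1)$, not $o(1)$. Either take $L_m\ll p_m\ll m^{1/2}$, e.g.\ $p_m=\lfloor m^{1/3}\rfloor$, or bound the fourth moments of the block sums.
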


Theorem~\ref{thm:null_limit} should be interpreted as an
increasing-precision result rather than a large-sample result. Although
only one realization is observed, recording it to \(m\) base-\(b\) digits
places it on a grid of size \(b^m\), so each additional digit introduces
a new scale at which the observation can be probed. The lacunary
frequencies \(j b^{t-1}\) separate these probes sufficiently for their
aggregate behavior to obey a central limit theorem. Consequently, for
each fixed \(j\), \(\bar Z_j=O_{\mathbb P}(m^{-1/2})\), and the quadratic
energy \(m|\bar Z_j|^2\) has a nondegenerate limit. Thus the statistic
\(T\) obtains its \(\chi^2_{2(b-1)}\) null distribution from accumulation
across digit scales, not from replication across observations.

The resulting \(p\)-value should be interpreted conditionally on the
model that the recorded value is a realization from a uniform random
mechanism. It is not a statement that an individual deterministic number
is intrinsically uniform or nonuniform.

\subsection{Behavior under Local Exponential Tilts}

Under alternatives, the null-centered harmonic averages \(\bar Z_j\) may acquire systematic drift, thereby changing the limiting behavior of the statistic. A stable noncentral limit is obtained when the departure
from uniformity is local, of order \(m^{-1/2}\), and is aligned with the
lacunary frequencies used in the construction of the test. Let \(P_0\) denote the uniform distribution on \([0,1]\). We consider
alternatives \(P_m\) whose density relative to \(P_0\) is given by
\begin{equation}\label{eq:LR_exp}
\frac{dP_m}{dP_0}(u)
=
\frac{\exp\{m^{-1/2}H_m(u)\}}
{\int_0^1\exp\{m^{-1/2}H_m(v)\}\,dv},
\end{equation}
where \(H_m\) is the lacunary trigonometric polynomial
\begin{equation}\label{eq:Hm}
H_m(u)
=
\sum_{t=1}^m\sum_{j=1}^{b-1}
\left[
\theta_{j,t}^{(c)}
\cos(2\pi j b^{t-1}u)
+
\theta_{j,t}^{(s)}
\sin(2\pi j b^{t-1}u)
\right],
\end{equation}
where the real coefficients \(\theta_{j,t}^{(c)}\) and
\(\theta_{j,t}^{(s)}\) specify the cosine and sine components of the
perturbation at harmonic order \(j\) and digit scale \(t\).
The denominator in \eqref{eq:LR_exp} normalizes the density, while the
factor \(m^{-1/2}\) gives the local scale at which the perturbation
produces an \(O(1)\) shift in the limiting harmonic averages.

The local exponential tilt in \eqref{eq:LR_exp} provides a calibrated
way to describe small departures from uniformity in the harmonic
directions to which the statistic is sensitive, in the same spirit as
smooth tests. Under \(P_0\), the random features
\(\cos(2\pi j b^{t-1}U)\) and \(\sin(2\pi j b^{t-1}U)\)
have mean zero and are orthogonal across distinct lacunary frequencies.
Tilting the uniform density by \(m^{-1/2}H_m(u)\) assigns slightly larger
density to points \(u\) with positive lacunary score \(H_m(u)\), and
slightly smaller density to points with negative score.  Thus the alternative introduces a weak but coherent bias toward specific
harmonic configurations across digit scales. This choice is also justified from a maximum-entropy perspective: among
distributions satisfying prescribed harmonic moment constraints encoded
by the lacunary score \(H_m\), the exponential tilt is the least
informative departure from the uniform law, equivalently the distribution
closest to \(P_0\) in relative entropy; e.g., see \cite{foley2025bayesian}.

The next theorem makes this behavior explicit: under the
exponential tilt, the normalized harmonic vector has a
shifted complex normal limit, and the statistic \(T\) converges to a
noncentral chi-square distribution whose noncentrality parameter is
determined by the coherent scale-averaged coefficients.

\begin{theorem}[Limit under exponential lacunary perturbations]
\label{thm:alt_regimes}
Let \(P_m\) be the sequence of alternatives defined by the likelihood
ratio \eqref{eq:LR_exp}. Assume that there exists \(M<\infty\) such that, for all \(m\),
$
\max_{1\le t\le m,\;1\le j\le b-1}
\left\{
|\theta_{j,t}^{(c)}|,
|\theta_{j,t}^{(s)}|
\right\}
\le M$ and that,
for each \(j=1,\ldots,b-1\), the limits
\[
\frac1m\sum_{t=1}^m\theta_{j,t}^{(c)}
\to \theta_j^{(c)},
\qquad
\frac1m\sum_{t=1}^m\theta_{j,t}^{(s)}
\to \theta_j^{(s)}.
\]
exist. Then, under \(P_m\),
\[
(\sqrt m\,\bar Z_1,\ldots,\sqrt m\,\bar Z_{b-1})
\overset{d}{\to}
\mathcal{CN}_{b-1}(\delta,I_{b-1}),
\qquad
\delta_j=\frac12\{\theta_j^{(c)}+i\theta_j^{(s)}\}.
\]
Hence,
\[
T
\overset{d}{\rightarrow}
\chi'^2_{2(b-1)}(\lambda),
\qquad
\lambda
=
2\sum_{j=1}^{b-1}|\delta_j|^2
=
\frac12\sum_{j=1}^{b-1}
\left\{
(\theta_j^{(c)})^2+(\theta_j^{(s)})^2
\right\}.
\]
\end{theorem}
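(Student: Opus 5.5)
The natural route is Le Cam's third lemma. Under $P_0$, Theorem~\ref{thm:null_limit} already gives the joint limit of the harmonic vector. It then suffices to show that the log-likelihood ratio $\Lambda_m=\log(dP_m/dP_0)(U)$ is asymptotically jointly Gaussian with $(\sqrt m\,\bar C_j,\sqrt m\,\bar S_j)_{j}$ under $P_0$, and to compute the cross-covariances.

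Write
\[
\Lambda_m=m^{-1/2}H_m(U)-\log\int_0^1 e^{m^{-1/2}H_m(v)}\,dv .
\]
Under $P_0$ the features $\cos(2\pi jb^{t-1}U)$ and $\sin(2\pi jb^{t-1}U)$ are centered and orthogonal across distinct frequencies, with variance $1/2$. The frequencies $jb^{t-1}$ can coincide for different pairs $(j,t)$ only when $j\,b^{t-1}=j'b^{t'-1}$, and for $j,j'<b$ this forces $j=j'$ and $t=t'$. Hence
\[
\Var_0\!\big(m^{-1/2}H_m(U)\big)=\frac{1}{2m}\sum_{t,j}\Big[(\theta^{(c)}_{j,t})^2+(\theta^{(s)}_{j,t})^2\Big]\le (b-1)M^2 .
\]
Pass to a subsequence along which this quantity converges to some $\sigma^2$; since the conclusion does not involve $\sigma^2$, this costs nothing.

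The normalizer needs more than a Taylor bound. Expanding $\log\int e^{m^{-1/2}H_m}$ gives $\tfrac12\Var_0(m^{-1/2}H_m)+o(1)$ only if the higher cumulants of $m^{-1/2}H_m(U)$ vanish. Here $H_m$ is not small pointwise, since $\|H_m\|_\infty$ can be of order $m$. The cleanest fix is to apply the lacunary CLT together with uniform integrability of $e^{m^{-1/2}H_m(U)}$. For uniform integrability I would use the standard exponential-moment (subgaussian) bound for Hadamard-gap trigonometric sums, $\mathbb E_0\exp\{\lambda m^{-1/2}H_m(U)\}\le e^{C\lambda^2}$, as in Salem--Zygmund. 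This gives $\log\int e^{m^{-1/2}H_m}\to\sigma^2/2$, so $\Lambda_m\Rightarrow\mathcal N(-\sigma^2/2,\sigma^2)$ and the sequences are contiguous.

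For the joint limit under $P_0$, apply the multivariate lacunary CLT (Cramér--Wold plus the same Salem--Zygmund theorem for lacunary sums with bounded coefficients) to an arbitrary real linear combination of $\sqrt m\,\bar C_j$, $\sqrt m\,\bar S_j$ and $m^{-1/2}H_m(U)$. Each such combination is again a lacunary trigonometric sum with frequencies $jb^{t-1}$ and bounded coefficients. Its limit is therefore Gaussian, with variance equal to the limit of the $L^2(P_0)$ norm, which orthogonality makes computable. The cross-covariance is
\[
\mathrm{Cov}_0\!\big(\sqrt m\,\bar C_j,\;m^{-1/2}H_m\big)=\frac1m\sum_t \theta^{(c)}_{j,t}\cdot\tfrac12\to\tfrac12\theta^{(c)}_j,
\]
and analogously for the sine part. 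The boundary frequencies with $2k\equiv0 \pmod{b^m}$ occur only for $t=m$ and $b$ even, so they affect $O(1)$ terms and vanish after the $1/m$ scaling.

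Two remaining points need care. First, the paper's $Z_{j,t}$ uses $U_m$ rather than $U$, and $|e^{2\pi i k U_m}-e^{2\pi i kU}|\le 2\pi k b^{-m}$. Summing over $t$ gives $O(m^{-1/2})$ after normalization, so the truncation is negligible under $P_0$ and hence, by contiguity, under $P_m$. Second, Le Cam's third lemma shifts the Gaussian limit of $(\sqrt m\,\bar C_j,\sqrt m\,\bar S_j)$ by the cross-covariance vector $(\tfrac12\theta^{(c)}_j,\tfrac12\theta^{(s)}_j)$ and leaves the covariance $\tfrac12 I$ per coordinate unchanged. This is exactly $\mathcal{CN}_{b-1}(\delta,I_{b-1})$ with $\delta_j=\tfrac12(\theta^{(c)}_j+i\theta^{(s)}_j)$.

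The continuous mapping theorem then gives $T=2\sum_j|\sqrt m\,\bar Z_j|^2$ converging to a sum of $2(b-1)$ squared unit-variance normals with means $\sqrt2\,\Re\delta_j$ and $\sqrt2\,\Im\delta_j$. This is $\chi'^2_{2(b-1)}(\lambda)$ with $\lambda=2\sum_j|\delta_j|^2$. Because the limit does not depend on the subsequence, the full sequence converges. I expect the main obstacle to be the control of the normalizing constant, namely the exponential moment bound for lacunary sums, since the crude pointwise expansion fails.
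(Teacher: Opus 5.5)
Your proposal is correct and follows essentially the same route as the paper: you obtain a joint null lacunary CLT for $(W_m, m^{-1/2}H_m(U))$ via Cram\'er--Wold and Salem--Zygmund, control the normalizing constant through a subgaussian bound (your uniform-integrability argument is equivalent to the paper's moment-bound and Taylor-series proof of Lemma~\ref{lem:exp_normalizing_constant}), and apply Le Cam's third lemma to shift the mean by the cross-covariance $\tfrac12(\theta_j^{(c)},\theta_j^{(s)})$. Your subsequence device is a small genuine improvement, because the paper's proof tacitly needs the limit $\sigma_H^2$ to exist (``provided the displayed limit exists''), which is not among the theorem's hypotheses; you also treat the $U_m$-versus-$U$ truncation explicitly, and since that bound is deterministic and $O(m^{-1/2})$, contiguity is not even needed there.
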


The theorem shows that local perturbations of size \(m^{-1/2}\) in the
lacunary harmonic directions produce a stable noncentrality parameter in
the limiting distribution of \(T\). The limiting drift depends only on
the averages of the scale-specific coefficients
\(\theta_{j,t}^{(c)}\) and \(\theta_{j,t}^{(s)}\). Thus, departures whose
phases align coherently across digit scales yield nonzero values of
\(\delta_j\) and are detected through a noncentral \(\chi^2\) limit. In
contrast, perturbations whose harmonic coefficients oscillate across
scales so that their averages vanish have no first-order effect on this
equal-weighted statistic, although they may still be detectable by
weighted or higher-order variants.

The parameter \(\lambda\) therefore measures the coherent cross-scale
harmonic signal carried by the alternative. This interpretation goes
beyond marginal digit imbalance: the coefficients describe phase
alignment along the lacunary frequencies \(j b^{t-1}\), and the statistic
aggregates these alignments across scales. Hence the test is locally
sensitive to structured multiscale phase coherence, not merely to
nonuniform one-digit marginals.

The exponential tilt in \eqref{eq:LR_exp} is one convenient way to
generate local alternatives, but the first-order conclusion is more
general. Suppose that a sequence of alternatives \(P_m\) is contiguous
to \(P_0\) and that, under \(P_0\),
\[
\log\frac{dP_m}{dP_0}(U)
=
m^{-1/2}H_m(U)-c_m+o_{P_0}(1),
\]
where \(c_m\) is deterministic and the likelihood ratios are
asymptotically normalized. If \(m^{-1/2}H_m(U)\) has a joint Gaussian
limit with the normalized harmonic averages
$
\big(\sqrt m\,\bar C_1,\sqrt m\,\bar S_1,\ldots,
\sqrt m\,\bar C_{b-1},\sqrt m\,\bar S_{b-1}\big)$,
then Le Cam's third lemma implies that the limiting distribution of
\(T\) under \(P_m\) is obtained by shifting the null Gaussian limit by
the covariance between this local score and the harmonic averages.
Consequently, only the projection of the local score onto the lacunary
harmonic directions contributes to the noncentrality parameter; components
orthogonal to these directions have no first-order effect.

\subsection{Comparison with Pearson's Digit Statistic}

The statistic \(T\) can be interpreted as a Pearson-type quadratic
deviation after reconstructing digit-level contrasts from the lacunary
Fourier coefficients. Define
\[
\widehat p_{\mathbb C}(d)
:=
\frac{1}{b}
+
\frac{1}{b}\sum_{j=1}^{b-1}\bar Z_j e^{-2\pi i j d/b},
\qquad d=0,\dots,b-1.
\]
By Parseval's identity for the discrete Fourier transform,
\[
\sum_{d=0}^{b-1}
\left|
\widehat p_{\mathbb C}(d)-\frac{1}{b}
\right|^2
=
\frac{1}{b}\sum_{j=1}^{b-1}|\bar Z_j|^2
=
\frac{T}{2mb}.
\]
Equivalently,
\[
T
=
2mb
\sum_{d=0}^{b-1}
\left|
\widehat p_{\mathbb C}(d)-\frac{1}{b}
\right|^2.
\]

This representation is analogous to Pearson's chi-square statistic for
\(b\) categories, which measures quadratic departure of empirical digit
frequencies from \(1/b\). Here, however, \(\widehat p_{\mathbb C}\) is
not the empirical marginal digit distribution, but a reconstruction from
the multiscale lacunary averages \(\bar Z_j\). Thus \(T\) measures
cross-scale harmonic imbalance rather than marginal digit imbalance,
clarifying why it may detect alternatives with nearly uniform digit
frequencies but coherent Fourier energy across scales.

\section{Numerical Examples}
\label{sec:finite}

\subsection{Finite-Grid Exponential Tilts}

The numerical experiments below examine the finite-precision behavior of
the lacunary statistic under the local alternatives studied in
Section~\ref{thm:alt_regimes}. Since the observed value is recorded only
to \(m\) base-\(b\) digits, the simulations are carried out on the grid $\mathcal G_m
=
\{0,b^{-m},\ldots,(b^m-1)b^{-m}\}$.
The finite-grid analogue of the exponential tilt
\eqref{eq:LR_exp} is considered. Specifically, for \(u\in\mathcal G_m\), let
\begin{equation}\label{eq:sim_model}
\mathbb P_\theta(U_m=u)
=
\frac{\exp\{m^{-1/2}H_m(u;\theta)\}}{Z_m(\theta)},
\qquad
Z_m(\theta)
=
\sum_{v\in\mathcal G_m}
\exp\{m^{-1/2}H_m(v;\theta)\},
\end{equation}
where \(H_m(\cdot;\theta)\) has the lacunary harmonic form
\eqref{eq:Hm}, parameterized by coefficients
$\theta = (\{\theta_{j,t}^{(c)},\theta_{j,t}^{(s)}\}, 1\le j\le b-1, 1\le t\le m)$. The null corresponds to \(\theta=0\), in which case \(U_m\) is uniform on
\(\mathcal G_m\). The factor \(m^{-1/2}\) matches the local scaling in
Theorem~\ref{thm:alt_regimes}. The following examples are designed to separate marginal digit imbalance from cross-scale harmonic structure. In particular, several configurations
have coefficients whose marginal digit effects are weak or balanced, but with coherent energy along the lacunary frequencies
\(j b^{t-1}\).

{\it Model (i): Phase-shift.} We consider a first-harmonic perturbation distributed across digit scales. Let $
\phi_t={2\pi (t-1)}/{b}$,  $t=1,\dots,m$,
and set $
\theta^{(c)}_{1,t}=a\cos\phi_t$, $
\theta^{(s)}_{1,t}=a\sin\phi_t$, $a = b \tau$, $\tau>0$,
with all remaining coefficients equal to zero. Thus the active first-harmonic coefficient has amplitude \(a\) but a phase that rotates uniformly across scales, completing one full cycle over \(t=1,\dots,m\). In this way, the perturbation is spread evenly over digit positions rather than concentrated at any single scale. The resulting first-order contribution averages to zero across scales, so the marginal digit distributions remain uniform, while dependence across scales is still introduced through the coherent phase progression.

{\it Model (ii): Cyclic phase-shift.} We extend Model~(i) by allowing the phase to vary periodically across scales. Fix a period \(L\ge 2\) and target digits \((r_1,\ldots,r_L)\in\{0,\ldots,b-1\}^L\). For \(t=1,\ldots,m\), let
$
\ell(t):=1+\{(t-1)\bmod L\}$, $
\phi_t:= {2\pi r_{\ell(t)}}/{b}$,
and define \(\theta^{(c)}_{1,t}\) and \(\theta^{(s)}_{1,t}\) as in Model~(i), with all remaining coefficients equal to zero. Thus the active first-harmonic coefficient has constant amplitude \(a\), but its phase cycles through \(L\) prescribed values as the scale \(t\) increases. In this way, the perturbation is spread across digit positions according to a repeating phase pattern rather than concentrated at any single scale. When \(m\) is a multiple of \(L\) and the cycle is balanced, the resulting first-order contributions average to zero over one full period, so the marginal digit distributions remain uniform, while periodic dependence across scales is still introduced through the repeating phase progression.

{\it Model (iii): Regime-switching persistence.} We extend Model~(i) by allowing the phase to evolve through a latent regime sequence across scales. Let \(Z_1,\dots,Z_m\in\{0,\dots,b-1\}\) denote a deterministic or stochastic path specifying the phase offset at each scale, and set $
\phi_t= {2\pi Z_t}/{b}$, $t=1,\dots,m$.
We then define \(\theta^{(c)}_{1,t}\) and \(\theta^{(s)}_{1,t}\) as in Model~(i), with all remaining coefficients equal to zero. Thus the active first-harmonic coefficient has constant amplitude \(a\), but its phase follows a baseline full-cycle progression together with a regime-dependent offset. In this way, the perturbation is spread across digit positions while allowing consecutive scales to share a common local phase preference. This induces persistence in the harmonic signal across neighboring digit positions. In the experiments below we consider \(b=10\) and deterministic regime paths consisting of consecutive blocks, for example $
Z_1,\dots,Z_m = 001122\ldots 8899$,
so that the phase remains approximately constant over short runs before shifting to a new offset.

{\it Model (iv): Smooth Neyman-type perturbations.} We extend Model~(i) by activating all first \(b-1\) harmonic orders within each digit cell, with Gaussian-decay weights \(w_j\propto \exp(-j^2)\) normalized so that \(\sum_{j=1}^{b-1} w_j^2=1\). For \(j=1,\dots,b-1\) and \(t=1,\dots,m\), we set $
\theta^{(c)}_{j,t}=a w_j \cos\phi_t$,
$\theta^{(s)}_{j,t}=a w_j \sin\phi_t$,
with \(\phi_t\) as in Model~(i). Thus, at each scale, the perturbation combines multiple harmonic components with smoothly decaying weights, while the common phase progression is shared across frequencies.
In this way, the perturbation is spread smoothly within each digit cell rather than concentrated on a single harmonic order. Because the phase progression is balanced across scales, the first-order contributions average to zero, so the marginal digit distributions remain uniform, while coherent alignment across scales still produces an aggregate multiscale effect.

\begin{figure}[htp]
\centering
\includegraphics[scale=0.85]{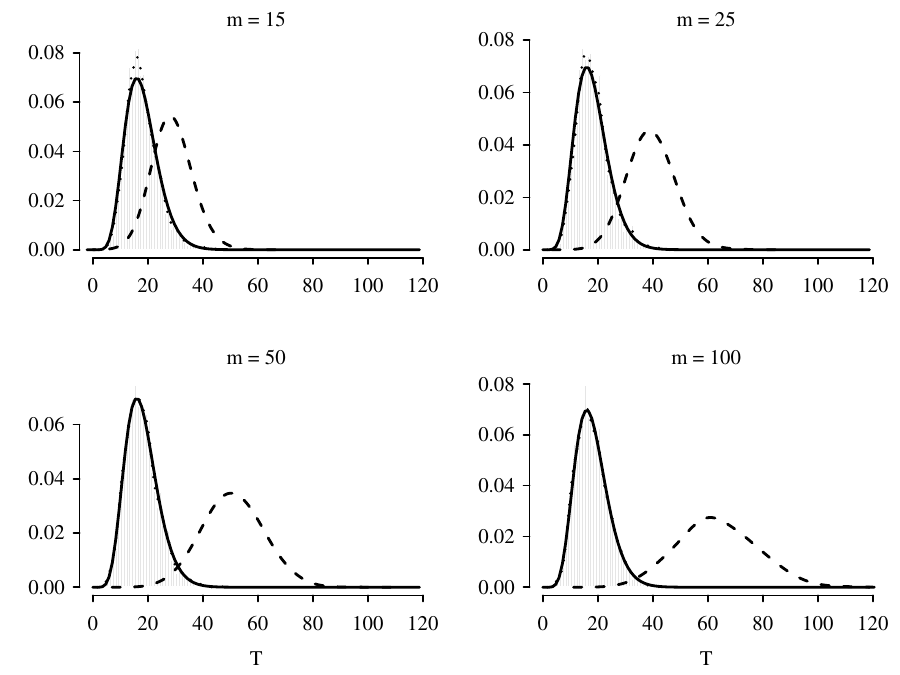}
\caption{Densities of the lacunary statistic $T$ for precision levels
$m \in \{15,25,50,100\}$ under $H_0$
(histograms and dotted kernel density estimates) and under the
phase-shift alternative (model (i)) with parameters
$\theta^{(c)}_{1,t}=2b\cos\phi_t$, $\theta^{(s)}_{1,t}=2b\sin\phi_t$,
$\phi_t=2\pi(t-1)/b$ (dashed kernel density estimates).
The solid line denotes the asymptotic $\chi^2_{18}$ null density.
Results are based on $1000$ post--burn-in Gibbs draws.}
\label{lacunary_null_vs_alt}
\end{figure}

As an illustration, $2000$ realizations were generated from the exponential-family model \eqref{eq:sim_model} for each configuration (i)--(iv) using a Gibbs sampler after discarding $1000$ burn-in iterations. Across all configurations considered, the finite-sample null distribution of the lacunary statistic closely matches its asymptotic $\chi^2_{2(b-1)}$ reference, even at relatively small precision $m$. The agreement improves systematically as $m$ increases, reflecting the strengthening lacunary central limit behaviour and confirming that the chi-square calibration becomes increasingly accurate with growing precision. Figure~\ref{lacunary_null_vs_alt} illustrates the finite-sample behavior
for model~(i) (phase-shift location) at $m=15,25,50,100$. Even at
$m=15$, the empirical null distribution is already very close to the
$\chi^2_{18}$ reference, and the agreement tightens further as $m$
increases. Under the coherent phase-shift alternative, the distribution
of the statistic shifts progressively to the right with increasing $m$,
and the separation between null and alternative becomes increasingly
pronounced. This pattern reflects the cumulative buildup of coherent
multiscale harmonic structure.

\begin{table}[t]
\centering
\caption{Monte Carlo rejection probabilities at nominal level
$\alpha=0.05$ for  lacunary (Lac),  marginal digit
chi-square test ($\chi^2$),  repeat-rate (RR),
and Kolmogorov--Smirnov (K-S) tests. Results are reported for $m \in \{15,50,100\}$ using 2000 post–burn-in Gibbs draws under models (i)–(iv), where $a = 10\tau$, $\tau\in \{0,1,2,3\}$; $\tau = 0$ corresponds to the empirical size.}
\label{tab:power_models}
\setlength{\tabcolsep}{4pt}
\begin{tabular}{l c cccc c cccc c cccc}
 &
& \multicolumn{4}{c}{$m=15$}
&& \multicolumn{4}{c}{$m=50$}
&& \multicolumn{4}{c}{$m=100$} \\
& $\tau=$
& 0 & 1 & 2 & 3
&& 0 & 1 & 2 & 3
&& 0 & 1 & 2 & 3 \\

&&\multicolumn{14}{c}{Model (i)} \\[2pt]
Lac          && 0.05 & 0.21 & 0.40 & 0.48 && 0.05 & 0.42 & 1.00 & 1.00 && 0.05 & 0.30 & 1.00 & 1.00 \\
$\chi^2$     && 0.04 & 0.04 & 0.02 & 0.01 && 0.05 & 0.04 & 0.03 & 0.01 && 0.05 & 0.04 & 0.04 & 0.03 \\
RR && 0.00 & 0.03 & 0.07 & 0.06 && 0.01 & 0.15 & 0.50 & 0.71 && 0.02 & 0.15 & 0.63 & 0.88 \\
K-S          && 0.04 & 0.06 & 0.03 & 0.01 && 0.05 & 0.03 & 0.01 & 0.01 && 0.04 & 0.02 & 0.02 & 0.02 \\[6pt]

&&\multicolumn{14}{c}{Model (ii)} \\[2pt]
Lac          && 0.05 & 0.31 & 0.74 & 0.86 && 0.05 & 0.41 & 1.00 & 1.00 && 0.05 & 0.30 & 1.00 & 1.00 \\
$\chi^2$     && 0.04 & 0.02 & 0.01 & 0.00 && 0.05 & 0.03 & 0.02 & 0.02 && 0.05 & 0.05 & 0.03 & 0.03 \\
RR && 0.00 & 0.00 & 0.00 & 0.00 && 0.01 & 0.02 & 0.01 & 0.00 && 0.02 & 0.04 & 0.04 & 0.01 \\
K-S          && 0.03 & 0.01 & 0.00 & 0.00 && 0.05 & 0.02 & 0.01 & 0.01 && 0.04 & 0.02 & 0.02 & 0.02 \\[6pt]

&& \multicolumn{14}{c}{Model (iii)} \\[2pt]
Lac          && 0.05 & 0.31 & 0.66 & 0.80 && 0.05 & 0.34 & 0.98 & 1.00 && 0.05 & 0.28 & 0.99 & 1.00 \\
$\chi^2$     && 0.04 & 0.05 & 0.03 & 0.02 && 0.05 & 0.04 & 0.05 & 0.03 && 0.05 & 0.04 & 0.03 & 0.02 \\
RR && 0.00 & 0.07 & 0.25 & 0.37 && 0.01 & 0.19 & 0.68 & 0.89 && 0.02 & 0.17 & 0.78 & 0.97 \\
K-S          && 0.03 & 0.10 & 0.17 & 0.23 && 0.05 & 0.07 & 0.11 & 0.10 && 0.04 & 0.03 & 0.02 & 0.01 \\[6pt]

&& \multicolumn{14}{c}{Model (iv)} \\[2pt]
Lac          && 0.05 & 0.22 & 0.48 & 0.59 && 0.05 & 0.43 & 0.99 & 1.00 && 0.05 & 0.30 & 1.00 & 1.00 \\
$\chi^2$     && 0.04 & 0.04 & 0.02 & 0.01 && 0.05 & 0.03 & 0.02 & 0.01 && 0.05 & 0.04 & 0.03 & 0.03 \\
RR && 0.00 & 0.04 & 0.07 & 0.06 && 0.01 & 0.15 & 0.52 & 0.69 && 0.02 & 0.14 & 0.64 & 0.90 \\
K-S          && 0.03 & 0.06 & 0.04 & 0.03 && 0.05 & 0.02 & 0.02 & 0.01 && 0.04 & 0.02 & 0.02 & 0.02 \\
\end{tabular}
\end{table}

Table~\ref{tab:power_models} reports empirical rejection probabilities
at nominal level $\alpha=0.05$ for the lacunary test under models
(i)--(iv),  summarizing both empirical size ($\tau=0$) and
power ($\tau>0$) across precision levels
$m\in\{15,50,100\}$.  For reference, we also consider the standard
$\chi^2$ goodness-of-fit test based on empirical marginal digit
frequencies,  a repeat-rate   test  and  the Kolmogorov-Smirnov. While $\chi^2$ and Kolmogorov-Smirnov tests are based on marginal distributions, the repeat-rate test targets specifically short-range dependence based on the  adjacent matches statistic
$
R=\sum_{t=2}^m I(D_{t}=D_{t-1})$, $n=m-1$, with p-value computed under binomial null
$R \sim\mathrm{Bin}(n, 1/b)$.

Overall, the lacunary test maintains accurate size in all
configurations and generally exhibits increasing power as both the signal strength $\tau$ and the digit precision $m$ grow. The marginal $\chi^2$ and Kolmogorov-Smirnov tests are blind and have no power since
alternatives preserve approximate marginal balance, while the repeat-rate test
detects persistence-driven departures for relatively large $m$ but not all considered models.
Overall, the results highlight that the lacunary statistic captures
structured cross-scale alignment beyond both marginal imbalance and
simple adjacent repetition.

\subsection{Irrational Rotation}

To construct alternatives with uniform one-digit marginals but
nontrivial dependence across scales, let \(b=10\) and consider an
irrational rotation on the unit circle. Fix \(\xi\in(0,1)\setminus
\mathbb{Q}\) and \(x_0\in[0,1)\), and define
$
x_t := x_0 + t\xi \pmod{1}$, $t\ge 1$.
From this rotation, generate digits $
D_t := \lfloor 10 x_t \rfloor \in \{0,\dots,9\}$, $t\ge 1$,
and form the truncated base-\(10\) expansion
$
U_m := 0.D_1D_2\cdots D_m$.
By Weyl's equidistribution theorem, the sequence \(\{x_t\}\) is
equidistributed on \([0,1)\). Hence the empirical frequencies of the
digits \(D_t\) converge to \(1/10\) for each \(d\in\{0,\dots,9\}\), so
the one-digit marginals are asymptotically uniform. At the same time,
the sequence \(\{D_t\}\) is entirely determined by the rotation and
therefore exhibits strong dependence across digit positions. In
particular, for each \(k\in\mathbb Z\),
$
e^{2\pi i k x_t}
=
e^{2\pi i k x_0} e^{2\pi i k t\xi}$,
so the \(k\)th harmonic component evolves over \(t\) by a fixed phase increment \(2\pi k\xi\) at each step.

\begin{figure}[h]
\centering
\includegraphics[scale=0.7]{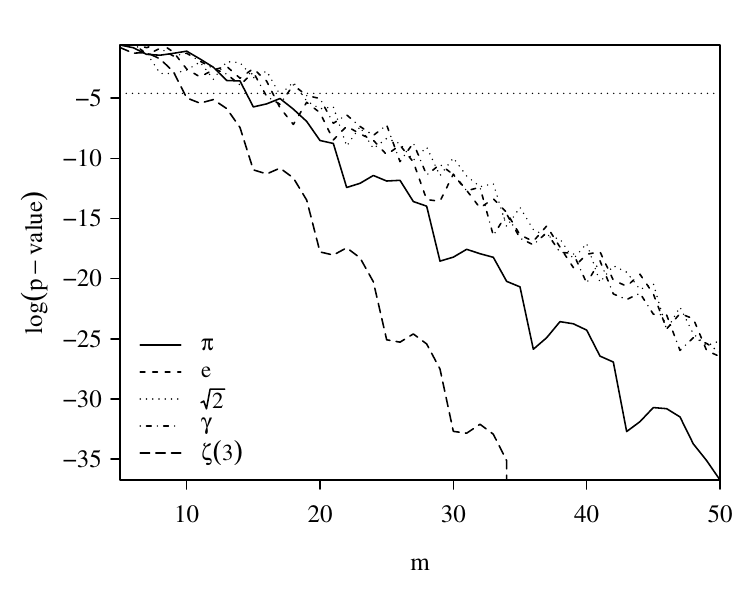}
\caption{Log $p$-values versus precision $m$ for truncated numbers $U_m$
whose digits are generated by the irrational rotation
$x_t = x_0 + t\xi \bmod 1$ with $x_0=0$ and
$\xi \in \{\pi, e, \sqrt{2}, \gamma, \zeta(3)\}$.
 Each curve corresponds to one
constant. The horizontal dashed line marks $\log(0.01)$, corresponding
to the $1\%$ significance level.}
\label{fig:constants}
\end{figure}

Figure~\ref{fig:constants} shows the log p-values as a function of $m$
for $U_m$ obtained from the irrational rotation with seed $x_0=0$
and $\alpha$ equal to $\pi$, $e$, $\sqrt{2}$, the Apéry’s constant $\zeta(3)$ and the Euler-Mascheroni
constant $\gamma$. While $\pi$, $e$, $\sqrt{2}$, and $\zeta(3)$ are known to be irrational,
the arithmetic nature of the Euler-Mascheroni constant $\gamma$ remains
unknown, as  its irrationality has not been established. Rejection at the $1\%$ level first occurs for $\gamma$ at $m=10$, for
$\pi$ at $m=15$, for $\zeta(3)$ at $m=16$, and for $e$ and $\sqrt{2}$ at
$m=17$; from $m=19$ onward, uniformity is rejected for all constants
considered. In contrast, the chi-square test based on
digit frequencies rejects at the $1\%$
level only for $\gamma$, and only once $m \ge 22$, with no rejection
observed for the remaining constants over the range considered.

These calculations are intended as structural diagnostics rather than
formal tests of randomness of the constants themselves. Since the
constants are deterministic, rejection means only that the finite prefix
considered exhibits multiscale harmonic features that are atypical under
the uniform random-digit benchmark; it does not imply failure of
normality or nonuniformity of the infinite digit expansion.

\subsection{Deterministic Non-linear Maps}

We consider two classical chaotic deterministic systems with strong
ergodic properties: the
logistic map and the Gauss map. In both cases we transform the
dynamics so that the invariant distribution becomes uniform on $[0,1]$ and
examine the resulting digit streams. \textit{Logistic map.}
Let $x_{t+1}=4x_t(1-x_t)$ on $(0,1)$, whose invariant density is
$f(x)=1/\{\pi\sqrt{x(1-x)}\}$. Applying the probability integral
transform
$
u_t=(2/\pi)\arcsin(\sqrt{x_t}),
$
maps the invariant distribution to the uniform law on $(0,1)$. Defining
$D_t=\lfloor 10 u_t\rfloor$ and forming
$U_m=0.D_1D_2\ldots D_m$, we obtain a deterministic digit sequence
with calibrated one-digit marginals. Using seed $x_0=0.1$ and prefixes
$m=1,\ldots,1000$, the lacunary statistic rejects the uniform-digit
model at the $1\%$ level from $m\ge 29$ onward
(p-value $=0.005$ at $m=29$), whereas the marginal chi-square test
never rejects at the $1\%$ level. \textit{Gauss map.}
Consider the Gauss map
$
x_{t+1}= {1}/{x_t}-\big\lfloor {1}/{x_t}\big\rfloor,
$
which is ergodic and mixing with invariant density
$f(x)=1/\{(1+x)\log 2\}$.
The transform
$
u_t=\log(1+x_t)/\log 2
$
again maps the invariant distribution to the uniform law on $(0,1)$.
Constructing digits as above, the marginal chi-square test again
never rejects at the $1\%$ level for $m = 1,\dots, 1000$.
In contrast, the lacunary statistic first rejects at $m=95$
(p-value $=0.008$) and rejects for essentially all larger $m$
(with a single non-rejection at $m=102$),
indicating cross-scale structure not captured by
marginal digit frequencies.

\subsection{Detecting Anomalies in Scale-Invariant Phenomena }

Many measurable phenomena arising from multiplicative mechanisms span several orders of magnitude and exhibit approximate scale invariance; examples include population counts, financial transactions, accounting entries, and geophysical magnitudes such as earthquake energies or river lengths. Under exact scale invariance of the random variable \(X\), the mantissa \(U=\{\log_{10}X\}\) is uniformly distributed on \([0,1]\). Classical Benford tests \citep{berger2015introduction} assess conformity through marginal digit frequencies across independent samples. By contrast,
the proposed multiscale harmonic statistic is designed to flag
single-number mantissas whose digit-scale harmonic structure is atypical
relative to the uniform mantissa benchmark. As an example, we apply our test to the base-10 log mantissas of a dataset of 772 accounting amounts from Sino-Forest Corporation's 2010 report; see, for example, \citep{nigrini2012benford}. At the aggregate level, the first-digit distribution deviates significantly from the Benford law.

At the \(5\%\) level, 25 observations are rejected by the proposed lacunary test based on \(m=10\) mantissa digits but not by the classical \(\chi^2\) digit-frequency test. At the more stringent \(1\%\) level, this subset reduces to 7 amounts (5{,}569; 6{,}641; 275{,}403{,}000; 1{,}292{,}000; 42{,}820{,}000; 67{,}339; 51{,}250{,}000). Some of these amounts display evident rounding and trailing-zero structure; others exhibit no visually striking digit patterns, yet their mantissa expansions still generate systematic cross-scale harmonic imbalance. To visualize this effect, we track the multiscale harmonic imbalance as the precision level increases, \(t=1,\dots,m\). Figure~\ref{fig:lacunary_cumulative_sinoforest} reports the cumulative multiscale imbalance
$
\sqrt{t} \sum_{r=1}^t|\widehat p_{C}^{(r)}(d) - 1/10|$, $t=1,\dots,10$,
based on the partial Fourier reconstruction
$\widehat p_C^{(t)}(d)
=
10^{-1}
+
10^{-1}
\sum_{j=1}^{b-1}
\bar Z_j^{(t)} e^{-2\pi i j d/b}$,
where $
\bar Z_j^{(t)}
=
t^{-1}
\sum_{r=1}^{t} Z_{j,r}$,
for the three observations with the largest lacunary statistic \(T\) among those not rejected by the marginal \(\chi^2\) test. Each panel corresponds to a digit \(d=0,\dots,9\), and the horizontal axis represents the precision level \(t\). Under uniform digits, the cumulative multiscale imbalance should fluctuate around its mean \((b-1)/b^2 = 9/100\). In contrast, many of the displayed trajectories exhibit persistent departures as \(t\) increases. This cumulative build-up of harmonic imbalance explains the large values of the lacunary statistic despite the absence of strong marginal digit-frequency distortions.

\begin{figure}[ht]
\centering
\includegraphics[width=\textwidth]{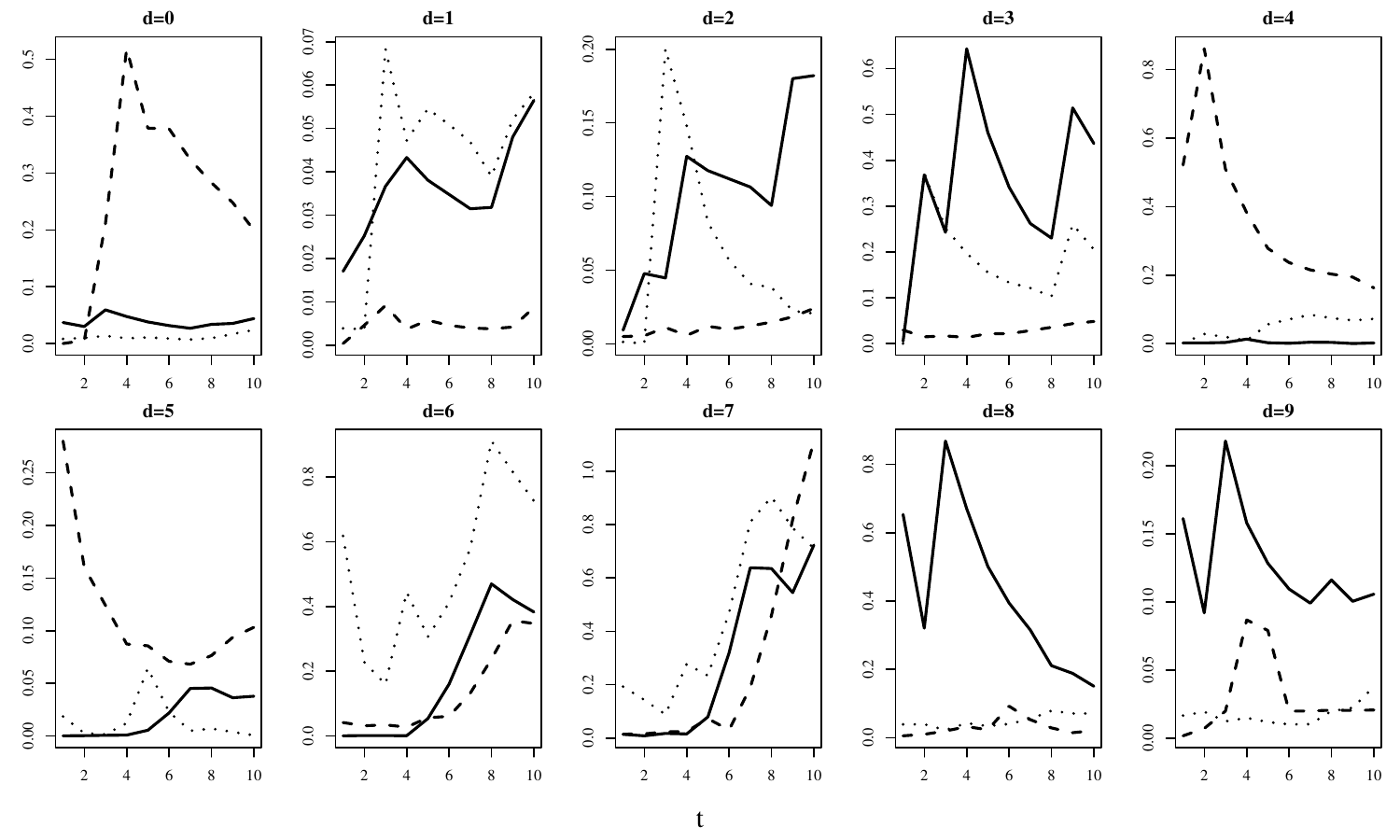}
\caption{
Cumulative lacunary imbalance
$\sqrt{t} \sum_{r=1}^t|\widehat p_{\mathbb{C}}^{(r)}(d) - 1/10|$, $t=1,\dots, 10$
for the base-10 log mantissas of the accounting amounts 67,339 (solid line), 275,403,000 (dashed line)  and  42,820,000 (dotted line) in the Sino-Forest data. Each panel corresponds to a digits $d \in \{0,\dots,9\}$.
Under uniformity the curves should fluctuate around the expected cumulative imbalance $(b-1)/b^2 = 9/100$, while deviations represent imbalance and drive rejection by the lacunary test.
}
\label{fig:lacunary_cumulative_sinoforest}
\end{figure}

\section{Conclusion}

Rather than relying on replication across observations, the proposed
procedure exploits variation across digit scales within a single high-precision realization. The multiscale harmonic components lie in
the lacunary regime, where their aggregate behavior approximates that of
independent increments. This yields
Gaussian fluctuations and the explicit \(\chi^2_{2(b-1)}\) null limit
for the statistic. Under general alternatives, detectability is
governed by cross-scale harmonic structure in the digit expansion
rather than by marginal digit imbalances. The empirical results corroborate this perspective. Simulated harmonic
perturbations exhibit the predicted null distribution and show an increase in power as numerical
precision grows. For deterministic digit streams generated by irrational rotations and
nonlinear maps, the statistic identifies lacunary harmonic structure
that is atypical relative to the uniform random-digit benchmark, even
when marginal digit frequencies appear calibrated. Likewise, in scale-invariant empirical data, the lacunary test detects structured regularities that remain invisible to standard methods based on marginal digit frequencies.

Several natural extensions of the proposed procedure remain open.
The statistic developed here is omnibus, aggregating harmonic
information with equal weights across digit scales
\(t=1,\dots,m\) and harmonic orders \(j=1,\dots,b-1\).
While this equal weighting yields a simple \(\chi^2_{2(b-1)}\)
null calibration and a transparent interpretation, it is not necessarily
optimal for all alternatives. The local analysis in
Section~\ref{thm:alt_regimes} shows that first-order detectability is
governed by the coherent cross-scale harmonic drift
$
\delta_j=\frac12\{\theta_j^{(c)}+i\theta_j^{(s)}\}$,
$j=1,\ldots,b-1$,
and by the associated noncentrality parameter
$
\lambda$. This suggests that alternative weighting schemes across scales or
harmonic orders could improve sensitivity to structured departures from
uniformity, for example by emphasizing components expected to carry
larger coherent drift or by adaptively selecting dominant harmonic
modes. Another useful generalization would consider multiple observations
\(U^{(1)},\dots,U^{(n)}\), each observed to precision \(m\). In that
setting, one could pool the corresponding harmonic coefficients across
observations, potentially retaining a tractable null calibration while
increasing power under alternatives through signal accumulation over
both \(n\) and \(m\).

Beyond methodological refinements, the proposed framework suggests
applications wherever scale invariance and high-resolution measurement
coexist. These include forensic accounting,
financial market microstructure data, environmental and
geophysical magnitudes,
digital sensor streams and cryptographic randomness diagnostics, among others. More broadly, the
results indicate that increasing measurement precision can act
as a surrogate for increasing sample size when stochastic
structure is organized across scales. Developing optimal, adaptive, and multivariate extensions of the
lacunary harmonic principle therefore represents a promising direction
for small-sample inference, goodness-of-fit testing, and anomaly
detection.

\section*{Acknowledgements}

I would like to thank Giulia Bertagnolli for her thoughtful comments and feedback, which helped improve this work.

\section*{Proofs}

\subsection{Proof of Lemma \ref{lem:grid_moments}}

Under $H_0$, $U_m$ is uniform on
$\mathcal G_m = \{ \ell b^{-m} : \ell = 0,\dots,b^m-1 \}$.
Hence, for any integer $k$,
\[
\mathbb E\!\left[e^{2\pi i k U_m}\right]
=
\frac{1}{b^m}
\sum_{\ell=0}^{b^m-1}
e^{2\pi i k \ell/b^m}.
\]
By discrete Fourier orthogonality on $\mathcal G_m$, this sum equals $1$
if $k \equiv 0 \pmod{b^m}$ and $0$ otherwise. In particular, for
$k_{j,t} = j b^{t-1}$ with $j\in\{1,\dots,b-1\}$ and $t\le m$, we have
$0<k_{j,t}<b^m$, so $k_{j,t}\not\equiv 0 \pmod{b^m}$ and therefore $
\mathbb E[ Z_{j,t} ] = 0$. Since $Z_{j,t}=C_{j,t}+iS_{j,t}$ with $
C_{j,t}=\cos(2\pi k_{j,t}U_m)$, $S_{j,t}=\sin(2\pi k_{j,t}U_m)$, $\mathbb E[Z_{j,t}]=0$ implies $
\mathbb E[C_{j,t}]=\mathbb E[S_{j,t}]=0$. For orthogonality, observe that $
\mathbb E\!\left[ Z_{j,t}\,\overline{Z_{j',t'}} \right]
=
\mathbb E\!\left[ e^{2\pi i (k_{j,t}-k_{j',t'}) U_m} \right]$.
By the same argument used above, this expectation vanishes unless
$k_{j,t}-k_{j',t'} \equiv 0 \pmod{b^m}$.
Since $0<k_{j,t},k_{j',t'}<b^m$, this congruence reduces to
$k_{j,t}=k_{j',t'}$. Finally, since \(|Z_{j,t}|=1\), we have $
C_{j,t}^2+S_{j,t}^2=1$
pointwise, and hence $
\mathbb E[C_{j,t}^2+S_{j,t}^2]=1$.
This proves the stated combined second-moment identity and completes the proof.

\subsection*{Proof of Lemma \ref{lem:harmonic_digit_link}}

Let $F_t := D_t/b + R_t$. By the law of total expectation, we can write
$$
\mathbb E\!\left[ Z_{j,t} \right] = \mathbb E\!\left[e^{2\pi i j F_t}\right]
=
\sum_{d=0}^{b-1}
\mathbb E\!\left[e^{2\pi i j F_t}\mid D_t=d\right]\mathbb P(D_t=d).
$$
On the event $\{D_t=d\}$ one has $F_t=d/b+R_t$, hence
$e^{2\pi i j F_t}=e^{2\pi i j d/b}e^{2\pi i j R_t}$, which yields
\eqref{eq:harmonic_moment}. The special case follows by independence.

\subsection*{Proof of Theorem \ref{thm:null_limit}}
Assume $H_0$. Let $U\sim\mathrm{Unif}(0,1)$ be defined on a probability space
$(\Omega,\mathcal F,\mathbb P)$, and let $U_m$ denote its base-$b$ truncation. Then $U_m$ takes values in $
\mathcal G_m=\{r b^{-m}: r=0,\ldots,b^m-1\}$
and is uniformly distributed on $\mathcal G_m$.  To show the result, we apply the Cram\'{e}r-Wold device to the linear combinations of
sine and cosine transformations of $U$ in combination with the lacunary central limit theorem.

Fix $a=(a_1^{(c)},a_1^{(s)},\ldots,a_{b-1}^{(c)},a_{b-1}^{(s)}) \in\mathbb R^{2(b-1)}$ and define the $1$-periodic  polynomial
\[
g_a(x)
:=
\sum_{j=1}^{b-1}\Big(
a_j^{(c)}\cos(2\pi j x)+a_j^{(s)}\sin(2\pi j x)
\Big),
\qquad x\in\mathbb R.
\]
Since $g_a$ has no constant term, $
\int_0^1 g_a(x)\,dx=0$. Define the  sum
\[
S_m(U):=\sum_{t=1}^m g_a\!\big(b^{t-1}U\big).
\]
Hence the
standard lacunary central limit theorem for sums of the form
\(\sum_{t=1}^m g(b^{t-1}U)\), with \(b>1\), applies directly \citep[Theorem~1.1]{salem1947lacunary}.
In the notation of that theorem the normalization constant is
\[
A_m^2
=
\frac12\sum_{t=1}^m\sum_{j=1}^{b-1}
\big((a_j^{(c)})^2+(a_j^{(s)})^2\big)
=
\frac{m}{2}\|a\|_2^2,
\]
so $A_m\asymp\sqrt m\to\infty$, while the maximal coefficient is $O(1)=o(A_m)$.
Hence
\[
\frac{S_m(U)}{A_m}\xrightarrow{d}\mathcal N(0,1).
\]
Since $A_m=\sqrt{m/2}\,\|a\|_2$, it follows that
\[
\frac{1}{\sqrt m}S_m(U)
\xrightarrow{d}
\mathcal N\!\left(0,\frac12\|a\|_2^2\right).
\]

It remains to replace $U$ by its truncation $U_m$. Since $|U-U_m|\le b^{-m}$ and $g_a$,
viewed as a $1$-periodic function on $\mathbb T=\mathbb R/\mathbb Z$, is $C^1$ and therefore
Lipschitz, there exists $L<\infty$ such that
\begin{align*}
\left|
\frac{1}{\sqrt m}S(U)
-
\frac{1}{\sqrt m}S(U_m)
\right|
\le
\frac{L}{\sqrt m}\sum_{t=1}^m b^{t-1}|U-U_m|
\le
\frac{L}{\sqrt m}\sum_{t=1}^m b^{t-1-m}
=
O(m^{-1/2}),
\end{align*}
hence the difference converges to $0$ in probability. Therefore, by Slutsky's theorem
\begin{equation}\label{eq:CW_goal_null_new}
\frac{1}{\sqrt m}\sum_{t=1}^m g_a\!\big(b^{t-1}U_m\big)
\xrightarrow{d}
\mathcal N\!\left(0,\frac12\|a\|_2^2\right).
\end{equation}
Now define
\begin{equation} \label{eq:W_m}
W_m
:=
\Big(\sqrt m\,\bar C_1,\sqrt m\,\bar S_1,\ldots,\sqrt m\,\bar C_{b-1},\sqrt m\,\bar S_{b-1}\Big)
\in\mathbb R^{2(b-1)}.
\end{equation}
Since \eqref{eq:CW_goal_null_new} holds for every $a\in\mathbb R^{2(b-1)}$, the Cram\'er--Wold theorem implies
\[
W_m \xrightarrow{d} \mathcal N_{2(b-1)}\!\left(0,\frac12 I_{2(b-1)}\right).
\]
Equivalently, with $\bar Z_j=\bar C_j+i\bar S_j$,
\[
(\sqrt m\,\bar Z_1,\ldots,\sqrt m\,\bar Z_{b-1})
\xrightarrow{d}
\mathcal{CN}_{b-1}(0,I_{b-1}).
\]

Finally, by the continuous mapping theorem,
\[
T
=
2m\sum_{j=1}^{b-1}|\bar Z_j|^2
=
2\sum_{j=1}^{b-1}\Big((\sqrt m\,\bar C_j)^2+(\sqrt m\,\bar S_j)^2\Big)
\xrightarrow{d}
\chi^2_{2(b-1)}.
\]
This completes the proof.

\begin{lemma}[Exponential normalizing constant]
\label{lem:exp_normalizing_constant}
Let \(U\sim P_0\), where \(P_0\) is the uniform distribution on
\([0,1]\), and let $
Y_m=m^{-1/2}H_m(U)$,
where $
H_m(u)$ is defined in \eqref{eq:Hm}. Assume that the coefficient arrays are uniformly bounded and that
\[
\sigma_H^2
:=
\frac12
\lim_{m\to\infty}
\frac1m
\sum_{t=1}^m\sum_{j=1}^{b-1}
\left\{
(\theta_{j,t}^{(c)})^2+
(\theta_{j,t}^{(s)})^2
\right\}
\]
exists. Then
$
E_{P_0}\{\exp(Y_m)\}
\to
\exp(\sigma_H^2/2)$.
\end{lemma}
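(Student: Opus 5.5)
The plan is to show two things: that $Y_m$ converges in distribution to $\mathcal N(0,\sigma_H^2)$, and that the family $\{\exp(Y_m)\}$ is uniformly integrable under $P_0$. Together these give $E_{P_0}\{\exp(Y_m)\}\to E\{e^{N}\}=\exp(\sigma_H^2/2)$ with $N\sim\mathcal N(0,\sigma_H^2)$. Both steps will use one device: a finite-range approximation of the lacunary terms by functions of finitely many digits, which are $K$-dependent.

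\emph{Setup and variance.} Write $H_m(U)=\sum_{t=1}^m g_{t}(b^{t-1}U)$, where
\[
g_t(x)=\sum_{j=1}^{b-1}\bigl\{\theta^{(c)}_{j,t}\cos(2\pi jx)+\theta^{(s)}_{j,t}\sin(2\pi jx)\bigr\}.
\]
Each $g_t$ is a $1$-periodic trigonometric polynomial with zero mean. Its sup norm and its Lipschitz constant are bounded by some $L=L(b,M)$, uniformly in $t$ and $m$.

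The frequencies $jb^{t-1}$ are pairwise distinct. Indeed, $jb^{t-1}=j'b^{t'-1}$ with $t>t'$ would force $j'=jb^{t-t'}\ge b$, which is impossible. Hence, by orthogonality of distinct frequencies under $P_0$ (as in Lemma~\ref{lem:grid_moments}, now for the continuous uniform law),
\[
\Var_{P_0}(Y_m)=\frac1{2m}\sum_{t=1}^m\sum_{j=1}^{b-1}\bigl\{(\theta^{(c)}_{j,t})^2+(\theta^{(s)}_{j,t})^2\bigr\}\to\sigma_H^2 .
\]

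\emph{Finite-range approximation.} Fix an integer $K\ge1$. By \eqref{eq:expansion}, $g_t(b^{t-1}U)$ is a function of the digits $D_t,D_{t+1},\dots$. Define
\[
X_{t}^{(K)}:=E\bigl[g_t(b^{t-1}U)\mid D_1,\dots,D_{t+K}\bigr].
\]
This quantity depends only on $(D_t,\dots,D_{t+K})$. It has mean zero, because integrating over $D_t$ uniform makes the mean $\int_0^1 g_t=0$. By the Lipschitz bound, $|g_t(b^{t-1}U)-X_t^{(K)}|\le Lb^{-K}$. Under $P_0$ the digits are i.i.d., so $(X_t^{(K)})_t$ is a bounded, centred, $K$-dependent triangular array.

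\emph{Convergence in distribution.} Apply a CLT for $K$-dependent bounded triangular arrays to $m^{-1/2}\sum_t X_t^{(K)}$. Alternatively, split the indices into residue classes modulo $K+1$ to obtain independent blocks and use Lindeberg. The variance of this sum differs from $\Var(Y_m)$ by $O(b^{-K})$. Letting $m\to\infty$ and then $K\to\infty$ (a standard approximation argument, e.g.\ Billingsley's Theorem 3.2) gives $Y_m\xrightarrow{d}\mathcal N(0,\sigma_H^2)$. Note that the deterministic remainder bound $m^{-1/2}\cdot mLb^{-K}$ does not vanish for fixed $K$. So at this step I will instead control the remainder in $L^2$: it is $\sqrt m$ times an average of terms of size $b^{-K}$, with variance $O(b^{-K})$ by the same orthogonality-type bookkeeping. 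That is enough for the double limit.

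\emph{Uniform integrability.} I aim to show $\sup_m E_{P_0}\exp(2Y_m)<\infty$. Decompose
\[
H_m=\sum_t X_t^{(K)}+\sum_t \bigl(g_t(b^{t-1}U)-X_t^{(K)}\bigr).
\]
For the second sum a crude pointwise bound is useless for fixed $K$. So I let $K=K_m\approx \tfrac12\log_b m$, which gives a remainder of size at most $m\cdot Lb^{-K_m}/\sqrt m=O(1)$ after the $m^{-1/2}$ scaling. For the first sum, split into the $K_m+1$ residue classes; each class is a sum of independent bounded centred terms. Hölder's inequality across classes and Hoeffding's lemma then give
\[
E\exp\Bigl(\tfrac{4}{\sqrt m}\sum_t X_t^{(K_m)}\Bigr)\le \exp\bigl(C(K_m+1)^2\cdot \tfrac{m}{K_m+1}\cdot \tfrac{L^2}{m}\bigr)=\exp\{C'(K_m+1)\}.
\]
This grows like a power of $m$, so it is not quite enough.

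This is the main obstacle. What I will try first is a sharper bound that avoids the Hölder loss: a martingale exponential inequality. Write $H_m$ as a sum of martingale differences $E[H_m\mid D_1..D_s]-E[H_m\mid D_1..D_{s-1}]$ with respect to the digit filtration. Because the influence of $D_s$ on $g_t(b^{t-1}U)$ decays like $Lb^{-(s-t)}$ for $s\ge t$, each increment is bounded by $\sum_{t\le s}Lb^{-(s-t)}\le Lb/(b-1)$. Azuma--Hoeffding then yields $E\exp(\lambda H_m)\le\exp(c\lambda^2 m)$ with $c$ independent of $m$. With $\lambda=2/\sqrt m$ this gives the desired uniform bound, and hence uniform integrability and the conclusion.
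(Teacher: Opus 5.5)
Your proof is correct, but it takes a genuinely different route from the paper.

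\textbf{The paper's route.} It has three steps:
\begin{enumerate}
\item It obtains $Y_m\xrightarrow{d}\mathcal N(0,\sigma_H^2)$ by citing the Salem--Zygmund lacunary CLT.
\item It asserts, without proof, a uniform sub-Gaussian moment bound $E_{P_0}|Y_m|^p\le (C\sqrt p)^p$, a known fact for Hadamard-gap series.
\item It expands $E_{P_0}e^{Y_m}$ in its Taylor series and passes to the limit term by term. Stirling's formula gives a summable dominating sequence, and uniform integrability of $Y_m^p$ gives moment convergence.
\end{enumerate}

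\textbf{Your route.} You exploit the fact that, under $P_0$, the frequencies $jb^{t-1}$ are aligned with i.i.d.\ uniform digits. The finite-range approximation $X_t^{(K)}$ gives the CLT. Azuma--Hoeffding on the digit filtration gives $E_{P_0}e^{\lambda H_m}\le e^{c\lambda^2 m}$ for all $\lambda$. That is exactly the uniform sub-Gaussian estimate the paper takes for granted. It also lets you conclude by uniform integrability of $e^{Y_m}$, with no series manipulation.

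\textbf{Trade-off.} The paper's route is shorter and is not tied to the digit structure: any Hadamard-gap frequencies would do. However, all of its substance rests on cited or asserted lacunary facts. Yours is self-contained with explicit constants, but it depends essentially on the base-$b$ alignment.

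\textbf{A technical point in your CLT step.} For fixed $K$, the limit $\lim_m\Var(m^{-1/2}\sum_tX_t^{(K)})$ need not exist. The truncation damps the variance contribution of harmonic $j$ by the factor $|\kappa_j|^2=1-O(j^2b^{-2K})$, where $\kappa_j=E\,e^{2\pi i j\epsilon}$ and $\epsilon\sim\mathrm{Unif}[0,b^{-K-1})$. The hypothesis, however, controls only the $j$-summed Ces\`aro average of the squared coefficients. So Billingsley's Theorem~3.2 cannot be applied verbatim. Apply it along subsequences on which these variances converge (a diagonal argument), or compare characteristic functions directly. The case $\sigma_H^2=0$ is trivial.

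\textbf{A shorter path.} Since $\sum_{d=0}^{b-1}e^{2\pi i jd/b}=0$ for $1\le j\le b-1$, averaging over $D_t$ gives $E[g_t(b^{t-1}U)\mid D_{t+1},D_{t+2},\dots]=0$. Hence the summands themselves are reverse-martingale differences bounded by $2(b-1)M$. This has three consequences:
\begin{itemize}
\item Azuma applies with no decay estimates.
\item Your truncations and remainders are orthogonal across $t$, which makes your $O(b^{-K})$ variance claims immediate.
\item The martingale CLT applies directly. The conditional variances equal $\tfrac12\sum_j\{(\theta^{(c)}_{j,t})^2+(\theta^{(s)}_{j,t})^2\}$ plus a frequency-$b^t$ term that averages out in $L^2$.
\end{itemize}
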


\begin{proof}
By the lacunary central limit theorem of \cite{salem1947lacunary}, using an argument analogous to that in the proof of Theorem \ref{thm:null_limit}, we have $Y_m \overset{d}{\to}
Y\sim N(0,\sigma_H^2)$. There exists a constant \(C<\infty\),
independent of \(m\), such that, for all integers \(p\ge2\),
\[
\frac{|E_{P_0}(Y_m^p)|}{p!}
\le
\frac{E_{P_0}|Y_m|^p}{p!}
\le
\frac{(C\sqrt p)^p}{p!}.
\]
By Stirling's formula, the right hand side is of order
${(Ce/\sqrt p)^p}/{\sqrt{2\pi p}}$,
which is summable over \(p\). Hence the Taylor expansion
\[
E_{P_0}\{\exp(Y_m)\}
=
\sum_{p=0}^{\infty}\frac{E_{P_0}(Y_m^p)}{p!}
\]
is dominated by a summable sequence, uniformly in \(m\).

For each fixed \(p\), choose \(q>p\). The same lacunary moment bound gives
\(\sup_m E_{P_0}|Y_m|^q<\infty\), so \(\{Y_m^p\}_{m\ge1}\) is uniformly
integrable. Since \(Y_m\overset{d}{\to}Y\), it follows that
$E_{P_0}(Y_m^p)\to E(Y^p)$.
Therefore dominated convergence may be applied term by term in the
Taylor series:
\[
E_{P_0}\{\exp(Y_m)\}
\to
\sum_{p=0}^{\infty}\frac{E(Y^p)}{p!}
=
E\{\exp(Y)\}.
\]
Since \(Y\sim N(0,\sigma_H^2)\), we have
$E\{\exp(Y)\}=\exp(\sigma_H^2/2)$. Hence, $
E_{P_0}\{\exp(Y_m)\}
\to
\exp(\sigma_H^2/2)$.
\end{proof}

\subsection*{Proof of Theorem \ref{thm:alt_regimes}}

We first establish the joint null limit of \((W_m,Y_m)\), where $
W_m$ is defined in \eqref{eq:W_m}
and $
Y_m=m^{-1/2}H_m(U)$.
Let $
q=
(q_1^{(c)},q_1^{(s)},\ldots,q_{b-1}^{(c)},q_{b-1}^{(s)})
\in\mathbb R^{2(b-1)}$
and let \(\rho\in\mathbb R\). By the Cramér--Wold device it is enough
to study $
q^\top W_m+\rho Y_m $. Using the definitions of \(W_m\) and \(H_m\), this linear combination can
be written as
\[
q^\top W_m+\rho Y_m
=
\frac1{\sqrt m}
\sum_{t=1}^m
\sum_{j=1}^{b-1}
\left[
\beta_{j,t}^{(c)}
\cos(2\pi j b^{t-1}U)
+
\beta_{j,t}^{(s)}
\sin(2\pi j b^{t-1}U)
\right],
\]
where $
\beta_{j,t}^{(c)}
=
q_j^{(c)}+\rho\,\theta_{j,t}^{(c)}$, $
\beta_{j,t}^{(s)}
=
q_j^{(s)}+\rho\,\theta_{j,t}^{(s)}$.
Thus \(q^\top W_m+\rho Y_m\) is a normalized lacunary trigonometric sum
with frequencies \(j b^{t-1}\). Since the coefficient arrays are
uniformly bounded and the frequencies satisfy the Hadamard gap condition,
the lacunary central limit theorem gives
\[
q^\top W_m+\rho Y_m
\overset{d}{\to}
N(0,\sigma^2(q,\rho)),
\]
where
\[
\sigma^2(q,\rho)
=
\frac12
\lim_{m\to\infty}
\frac1m
\sum_{t=1}^m
\sum_{j=1}^{b-1}
\left[
\{\beta_{j,t}^{(c)}\}^2+
\{\beta_{j,t}^{(s)}\}^2
\right],
\]
provided the displayed limit exists.
Expanding the square yields
\[
\sigma^2(q,\rho)
=
\frac12\sum_{j=1}^{b-1}
\left[
(q_j^{(c)})^2+(q_j^{(s)})^2
\right]
+
\rho
\sum_{j=1}^{b-1}
\left[
q_j^{(c)}\frac{\theta_j^{(c)}}2
+
q_j^{(s)}\frac{\theta_j^{(s)}}2
\right]
+
\rho^2\sigma_H^2,
\]
where
\[
\theta_j^{(c)}
=
\lim_{m\to\infty}\frac1m\sum_{t=1}^m\theta_{j,t}^{(c)},
\ \
\theta_j^{(s)}
=
\lim_{m\to\infty}\frac1m\sum_{t=1}^m\theta_{j,t}^{(s)},
\ \
\sigma_H^2
=
\frac12
\lim_{m\to\infty}
\frac1m
\sum_{t=1}^m
\sum_{j=1}^{b-1}
\left[
\{\theta_{j,t}^{(c)}\}^2+
\{\theta_{j,t}^{(s)}\}^2
\right].
\]
Therefore every linear combination of \((W_m,Y_m)\) converges to a
normal distribution. By the Cramér--Wold device,
$
(W_m,Y_m)
\overset{d}{\to}
(W,Y)$,
where \((W,Y)\) is jointly Gaussian with
\[
W\sim N_{2(b-1)}\!\left(0,\frac12 I_{2(b-1)}\right),
\ \
Y\sim N(0,\sigma_H^2), \ \
\operatorname{Cov}(W,Y)
= \theta :=
\frac12
\big(
\theta_1^{(c)},\theta_1^{(s)},\ldots,
\theta_{b-1}^{(c)},\theta_{b-1}^{(s)}
\big).
\]
We now use this joint null limit to obtain the distribution under the
local alternative. Let
\[
\Lambda_m(U)
=
\log \frac{dP_m}{dP_0}(U)
=
Y_m
-
\log E_{P_0}\{\exp(Y_m)\},
\qquad
Y_m=m^{-1/2}H_m(U).
\]

By Lemma \ref{lem:exp_normalizing_constant}, we have $
\log E_{P_0}\{\exp(Y_m)\}\to \frac12\sigma_H^2$.
Then, by Slutsky's theorem and the joint convergence established above, under $P_0$ we have $
(W_m,\Lambda_m)
\overset{d}{\to}
(W,L)$,
where $
L=Y-\sigma_H^2/2$.
Since \(Y\sim N(0,\sigma_H^2)\), we have
\[
E\{\exp(L)\}
=
\exp\left(-\frac12\sigma_H^2\right)
E\{\exp(Y)\}
=
\exp\left(-\frac12\sigma_H^2\right)
\exp\left(\frac12\sigma_H^2\right)
=
1.
\]
Thus the limiting likelihood ratio is well normalized.

By Le Cam's third lemma \citep[Theorem 6.6]{van2000asymptotic}, the limiting distribution of \(W_m\) under
\(P_m\) is the \(e^L\)-tilted distribution of \(W\). Since \((W,Y)\) is
jointly Gaussian and \(L=Y-\sigma_H^2/2\), this exponential tilt shifts
the mean of \(W\) by $
\operatorname{Cov}(W,Y)
=
\theta$
while leaving its covariance matrix unchanged. Hence, under \(P_m\),
\[
W_m
\overset{d}{\to}
N_{2(b-1)}\!\left(\theta,\frac12 I_{2(b-1)}\right),
\]
or, equivalently in complex notation,
\[
(\sqrt m\,\bar Z_1,\ldots,\sqrt m\,\bar Z_{b-1})
\overset{d}{\to}
\mathcal{CN}_{b-1}(\delta,I_{b-1}),
\qquad
\delta_j=\frac12\{\theta_j^{(c)}+i\theta_j^{(s)}\}.
\]
Therefore, by the continuous mapping theorem,
\[
T
=
2m\sum_{j=1}^{b-1}|\bar Z_j|^2
=
2\sum_{j=1}^{b-1}|\sqrt m\,\bar Z_j|^2  \overset{d}{\to}
2\sum_{j=1}^{b-1}|G_j+\delta_j|^2,
\]
where \(G_j\) is a centered complex normal variable with independent real
and imaginary parts of variance \(1/2\). Thus, the random variable
\(2|G_j+\delta_j|^2\) is a noncentral chi-square contribution with two degrees
of freedom and noncentrality \(2|\delta_j|^2\). Summing over
\(j=1,\ldots,b-1\), we obtain
\[
T
\overset{d}{\to}
\chi'^2_{2(b-1)}(\lambda), \qquad
\lambda
=
2\sum_{j=1}^{b-1}|\delta_j|^2
=
\frac12\sum_{j=1}^{b-1}
\left\{
(\theta_j^{(c)})^2+(\theta_j^{(s)})^2
\right\}.
\]
This completes the proof.

\bibliographystyle{abbrvnat}
\bibliography{biblio_lacunary}

@book{van2000asymptotic,
  title={Asymptotic statistics},
  author={Van der Vaart, Aad W},
  volume={3},
  year={2000},
  publisher={Cambridge university press}
}

@article{foley2025bayesian,
  title={Bayesian inference and the principle of maximum entropy},
  author={Foley, Duncan K and Scharfenaker, Ellis},
  journal={The American Statistician},
  volume={79},
  number={4},
  pages={467--473},
  year={2025},
  publisher={Taylor \& Francis}
}

@article{rayleigh1919xxxi,
  title={XXXI. On the problem of random vibrations, and of random flights in one, two, or three dimensions},
  author={Rayleigh, Lord},
  journal={The London, Edinburgh, and Dublin Philosophical Magazine and Journal of Science},
  volume={37},
  number={220},
  pages={321--347},
  year={1919},
  publisher={Taylor \& Francis}
}

@article{neyman1937smooth,
  title={Smooth tests for goodness of fit},
  author={Neyman, Jerzy},
  journal={Scandinavian Actuarial Journal},
  volume={1937},
  number={3-4},
  pages={149--199},
  year={1937},
  publisher={Taylor \& Francis}
}

@article{kac1946distribution,
  title={On the distribution of values of sums of the type $\Sigma$ f (2 kt)},
  author={Kac, M},
  journal={Annals of Mathematics},
  volume={47},
  number={1},
  pages={33--49},
  year={1946},
  publisher={JSTOR}
}

@article{salem1947lacunary,
  title={On lacunary trigonometric series},
  author={Salem, Raphael and Zygmund, Antoni},
  journal={Proceedings of the National Academy of Sciences},
  volume={33},
  number={11},
  pages={333--338},
  year={1947}
}

@book{kahane1985some,
  title={Some random series of functions},
  author={Kahane, Jean-Pierre},
  volume={5},
  year={1985},
  publisher={Cambridge University Press}
}

@book{zygmund2002trigonometric,
  title={Trigonometric series},
  author={Zygmund, Antoni},
  volume={1},
  year={2002},
  publisher={Cambridge university press}
}

@article{eubank1992asymptotic,
  author  = {Eubank, R. L. and LaRiccia, V. N.},
  title   = {Asymptotic Comparison of {C}ram{\'e}r--von {M}ises and Nonparametric Function Estimation Techniques for Testing Goodness-of-Fit},
  journal = {The Annals of Statistics},
  year    = {1992},
  volume  = {20},
  number  = {4},
  pages   = {2071--2086}
}

@article{durbin1972components1,
  author  = {Durbin, J. and Knott, M.},
  title   = {Components of {C}ram{\'e}r--von {M}ises Statistics. {I}},
  journal = {Journal of the Royal Statistical Society: Series B},
  year    = {1972},
  volume  = {34},
  number  = {3},
  pages   = {290--307}
}

@article{kuiper1960tests,
  author  = {Kuiper, N. H.},
  title   = {Tests concerning random points on a circle},
  journal = {Proceedings of the Koninklijke Nederlandse Akademie van Wetenschappen, Series A},
  year    = {1960},
  volume  = {63},
  pages   = {38--47}
}

@article{watson1961goodness,
  author  = {Watson, G. S.},
  title   = {Goodness-of-fit tests on a circle},
  journal = {Biometrika},
  year    = {1961},
  volume  = {48},
  number  = {1/2},
  pages   = {109--114}
}

@article{beran1979exponential,
  title={Exponential models for directional data},
  author={Beran, Rudolf},
  journal={The Annals of Statistics},
  pages={1162--1178},
  year={1979},
  publisher={JSTOR}
}

@article{gine1975invariant,
  title={Invariant tests for uniformity on compact Riemannian manifolds based on Sobolev norms},
  author={Gin{\'e}, Evarist},
  journal={The Annals of statistics},
  volume={3},
  number={6},
  pages={1243--1266},
  year={1975},
  publisher={Institute of Mathematical Statistics}
}

@article{prentice1978invariant,
  title={On invariant tests of uniformity for directions and orientations},
  author={Prentice, MJ},
  journal={The Annals of Statistics},
  pages={169--176},
  year={1978},
  publisher={JSTOR}
}

@book{mardia2009directional,
  title={Directional statistics},
  author={Mardia, Kanti V and Jupp, Peter E},
  year={2009},
  publisher={John Wiley \& Sons}
}

@book{ley2017modern,
  title={Modern directional statistics},
  author={Ley, Christophe and Verdebout, Thomas},
  year={2017},
  publisher={Chapman and Hall/CRC}
}

@article{jiang2025asymptotic,
  title={Asymptotic analysis of high-dimensional uniformity tests under heavy-tailed alternatives},
  author={Jiang, Tiefeng and Pham, Tuan},
  journal={arXiv preprint arXiv:2506.00393},
  year={2025}
}

@book{nigrini2012benford,
  title={Benford's Law: Applications for forensic accounting, auditing, and fraud detection},
  author={Nigrini, Mark J},
  year={2012},
  publisher={John Wiley \& Sons}
}

@book{berger2015introduction,
  title={An introduction to Benford's law},
  author={Berger, Arno and Hill, Theodore P},
  year={2015},
  publisher={Princeton University Press}
}

@book{rayner2009smooth,
  title={Smooth tests of goodness of fit: using R},
  author={Rayner, John CW and Thas, Olivier and Best, Donald John},
  year={2009},
  publisher={John Wiley \& Sons}
}

@article{fisher1929tests,
  author = {Fisher, Ronald A.},
  title = {Tests of significance in harmonic analysis},
  journal = {Proceedings of the Royal Society A},
  year = {1929},
  volume = {125},
  number = {796},
  pages = {54--59}
}

@article{bartlett1950periodogram,
  author = {Bartlett, Maurice S.},
  title = {Periodogram analysis and continuous spectra},
  journal = {Biometrika},
  year = {1950},
  volume = {37},
  number = {1-2},
  pages = {1--16}
}

@book{GrenanderRosenblatt2008,
  author = {Grenander, Ulf and Rosenblatt, Murray},
  title = {Statistical Analysis of Stationary Time Series},
  publisher = {American Mathematical Society},
  series = {AMS Chelsea Publishing},
  year = {2008},
  address = {Providence, RI},
  note = {Reprint of the 1957 Wiley edition}
}

@article{Hong1996,
  author = {Hong, Yongmiao},
  title = {Consistent testing for serial correlation of unknown form},
  journal = {Econometrica},
  year = {1996},
  volume = {64},
  number = {4},
  pages = {837--864}
}

@article{Paparoditis2000,
  author = {Paparoditis, Efstathios},
  title = {Spectral density based goodness-of-fit tests for time series models},
  journal = {Scandinavian Journal of Statistics},
  year = {2000},
  volume = {27},
  number = {1},
  pages = {143--176}
}

@article{Bagchi2018,
  author = {Bagchi, Pramita and Characiejus, Vaidotas and Dette, Holger},
  title = {A test for white noise in functional time series},
  journal = {Annals of Statistics},
  year = {2018},
  volume = {46},
  number = {6A},
  pages = {2953--2984}
}

@article{CharaciejusRice2020,
  author = {Characiejus, Vaidotas and Rice, Gregory},
  title = {A frequency domain test for independence of functional time series},
  journal = {Journal of the American Statistical Association},
  year = {2020},
  volume = {115},
  number = {531},
  pages = {1630--1643}
}

@article{KimKokoszkaRice2023,
  author = {Kim, Mihyun and Kokoszka, Piotr and Rice, Gregory},
  title = {White noise testing for functional time series},
  journal = {Statistics Surveys},
  year = {2023},
  volume = {17},
  pages = {119--168}
}

\end{document}